\newtheorem{theorem}{Theorem}[section]
\newtheorem{corollary}[theorem]{Corollary}
\newtheorem{lemma}[theorem]{Lemma}
\newtheorem{proposition}[theorem]{Proposition}
\theoremstyle{definition}
\newtheorem{assumption}[theorem]{Assumption}
\theoremstyle{remark}
\numberwithin{equation}{section}
\def\rr{{\mathbb{R}}}
\def\cc{{\mathbb{C}}}
\def\nn{{\mathbb{N}_0}}
\def\cp{{\mathbb{C}_+}}
\def\ee{{\mathbb{E}\,}}
\def\pp{{\mathbb{P}}}
\def\zel{{z_\ell}}
\def\p{{\bf p}} 
\def\rg{{\mathcal{R}}}
\newcommand{\im}{\operatorname{Im }}
\newcommand{\supp}{\operatorname{supp }}
\newcommand{\tr}{\operatorname{Tr }}
\newcommand{\ketbra}[1]{{|#1\rangle\langle #1|}}
\newcommand{\beq}[1]{ \begin{equation} \label{#1} }
\newcommand{\eeq}{ \end{equation} }
\begin{document}
%
%
%
%
%
%
%
%
%

\title[Hierarchical Anderson Model]{Renormalization Group Analysis of the\\ Hierarchical Anderson Model}
\author{Per von Soosten}
\address{Zentrum Mathematik, TU M\"{u}nchen\\
Boltzmannstr. 3, 85747 Garching, Germany}
\email{vonsoost@ma.tum.de}
\author{Simone Warzel}
\address{Zentrum Mathematik, TU M\"{u}nchen\\
Boltzmannstr. 3, 85747 Garching, Germany}
\email{warzel@ma.tum.de}
\subjclass{47B80, 82B44}
\keywords{Anderson localization, eigenvalue statistics, renormalization group, hierarchical Anderson model}
\date{October 31, 2016}
\begin{abstract}
We apply Feshbach-Krein-Schur renormalization techniques in the hierarchical Anderson model to establish a criterion on the single-site distribution which ensures exponential dynamical localization as well as positive inverse participation ratios and Poisson statistics of eigenvalues. Our criterion applies to all cases of exponentially decaying hierarchical hopping strengths and holds even for spectral dimension $ d > 2 $, which corresponds to the regime of transience of the underlying hierarchical random walk. This challenges recent numerical findings that the spectral dimension is significant as far as the Anderson transition is concerned.
\end{abstract}
\maketitle

\parindent = 4mm 
\section{Introduction} \label{intro}
Ever since their introduction by F.~Dyson~\cite{MR0436850,MR0295719}, hierarchical models have provided cornerstones in the landscape of analytically tractable systems in statistical mechanics. Their analysis is facilitated by a renormalization group in the spirit of Wilson and Fisher, which becomes rigorous, though non-trivial, in the hierarchical case. 
As demonstrated in many special cases, hierarchical models exhibit phase transitions quite analogous to their relatives on the finite-dimensional lattices $ \mathbb{Z}^d$ (cf.~\cite{BBS16,MR1552598,MR0295719} and references therein). In fact, one of the key features of hierarchical models is an effective dimension which is a tunable parameter.  

It is therefore quite natural to try to shed light on the Anderson transition by studying the hierarchical version of Anderson's model. This transition pertains to sharply separated regions in the energy-disorder phase diagram of a quantum particle in a random medium. Its features include localization vs. delocalization properties of the eigenvectors as well as an accompanying change in the statistical properties of the eigenvalues  (Poisson vs. random matrix statistics). While the localization side of the story has by now been understood  fairly well for any graph, and in particular for $ \mathbb{Z}^d $, proofs of the existence of a delocalized phase for $ d > 2 $ remain elusive and only pertain to special situations like tree graphs or other toy models (cf.~\cite{MR3364516} and references therein; see also~\cite{RevModPhys.80.1355}). 

The study of the hierarchical Anderson model goes back to A.~Bovier~\cite{MR1063180} who investigated its density of states - a quantity which however does not contain any information about possible phase transitions. He nevertheless conjectured the appearance of a special energy at which delocalized states persist under weak disorder for (effective spectral) dimension $ d > 4 $. The first proof of complete spectral localization for the hierarchical model in the special case of a Cauchy random potential of arbitrary strength is due to  S.~Molchanov \cite{MR1463464}. His proof ideas where later extended to more general distributions by E.~Kritchevski \cite{MR2352276,MR2413200}. Notably, neither of these works proved or even conjectured the appearance of a delocalized phase in the regime of long-range, but summable hopping strength of the hierarchical Laplacian. This belief was recently challenged by F.~Metz, L.~Leuzzi, G.~Parisi, and V.~Sacksteder~\cite{PhysRevB.88.045103}, who reported numerical evidence for the appearance of a special energy in (effective spectral) dimension $ d > 2 $, for which delocalized eigenvectors appear in the case of a weak Gaussian random potential. Superficially, this numerical result looks related to the existence of resonant delocalization at special energies in yet another toy version of the Anderson model, namely that on the complete graph~\cite{MR3383321}. One of the main aims of this paper is to argue that as long as the hierarchical hopping is summable, all states are localized in every possible specification of that term and the eigenvalues exhibit Poisson statistics. Notably, this covers the case of  (effective spectral) dimension $ d > 2 $ and hence shows that, in this sense, the hierarchical Anderson model breaks the ranks of its counterparts in statistical mechanics for which this effective dimension proved significant for phase transitions. 

The findings in this paper are in agreement with the claims of~\cite{monthusgarel}, in which multifractality of the eigenfunctions and intermediate eigenvalue statistics are found numerically for the hierarchical Anderson model with critically non-sum\-mable hopping strengths of alternating sign. In fact, the conjectured location of the hierarchical Anderson transition at non-summable hopping strength is also in agreement with similar behavior in hierarchical random matrix models~\cite{FYOSSRO11,1742-5468-2011-03-L03001} and power-law banded random matrices (PBRM)~\cite{RevModPhys.80.1355}. (See also~\cite{FyKuWe15}  for a recent rigorous analysis of the critical point in PBRMs.)

Let us now specify the details of the hierarchical Anderson model and state our main assumptions in this paper. Consider the configuration space $\nn = \{0,1,2,...\}$ endowed with the (ultra)metric
\[d(j,k) = \min  \left\{ r \geq 0 \, | \, \mbox{$j $ and $ k $ belong to a common member of $\mathcal{P}_r$} \right\},\]
where $\{\mathcal{P}_r\}$ is the nested sequence of partitions defined by
\[\nn = \{0, ..., 2^{r}-1\} \cup \{2^r, ..., 2\cdot 2^r - 1\} \cup ...\]
Thus the ball $ B_r(j) = \{ k \in \nn \, | \, d(j,k) \leq r \} $ is precisely the member of $ \mathcal{P}_r$ containing $j \in \nn$. Each partition $\mathcal{P}_r$ induces an averaging operator $E_r: \ell^2 \to \ell^2$ defined by
\[E_r \psi (j) = 2^{-r} \sum_{k \in B_r(j)} \psi(k),\]
and, taking linear combinations, we obtain a family of hierarchical Laplacians
\begin{equation}\label{eq:Lapl}
\Delta = \sum_{r \geq 1} p_r E_r 
\end{equation}
indexed by a summable sequence $ \p = \left( p_r \right)_{r \geq 1}  $. 
We will restrict our attention to the case in which
\[ |p_r| \leq  \epsilon \, 2^{-cr} \]
with $ c > 0 $ and $ \epsilon > 0 $ (not necessarily small).

The definition of the hierarchical Laplacian is essentially motivated by keeping only those eigenfunctions in the spectral representation of the finite-difference Laplacian on ${\mathbb{Z}}^d$ whose periods have length $2^r$, and then splitting each such eigenfunction into orthogonal translations of a single period. The following properties are easy to check (see, e.g.~\cite{MR2352276}):
\begin{enumerate}
\item The spectral decomposition of $ \Delta $ reads
\[\Delta = \sum_{r=1}^\infty \lambda_r P_r\]
with eigenprojections $P_r =  E_r-E_{r+1} $. The corresponding eigenvalues $ 0, \lambda_1= p_1 , \dots , \lambda_n= \sum_{r=1}^n p_r , \dots $ are infinitely degenerate and accumulate at $ \lambda_\infty :=  \sum_{r=1}^\infty p_r $. The eigenfunctions become delocalized as $\lambda \to \lambda_\infty $. 
\item For the special case $\epsilon^\prime 2^{-cr} \le p_r \le \epsilon 2^{-cr}$, the decay rate of $ \p $ is linked to the spectral dimension $ d_s $  of the hierarchical Laplacian via the formula
\[
d_s  \vcentcolon =  \lim_{\lambda \downarrow 0 } \ \frac{\ln \ \langle \delta_k , 1_{[\lambda_\infty-\lambda, \lambda_\infty]}(\Delta) \delta_k \rangle }{\ln \sqrt{\lambda}} = \frac{2}{c}.
\]
The definition of $ d_s $ is motivated by comparing the behavior of the spectral measure of the Laplacian associated with any localized vector,
\[ \delta_k(x) \vcentcolon = \begin{cases} 1 , & x=k \\ 0 , &x \neq k \end{cases}\;, \] in the vicinity of the upper spectral edge to the corresponding result for the $ d $-dimensional lattice. Notably, and analogously to $ \mathbb{Z}^d $, the random walk generated by $\Delta $ is recurrent if $ d_s \leq 2 $ and transient if $ d_s > 2 $. 
\item The hopping strength between two sites $ j,k \in \nn $ satisfies
\begin{equation}\label{eq:LaplaceME}
 \left| \langle \delta_j , \Delta \delta_k \rangle \right| \leq \sum_{r\geq d(j,k)} \frac{|p_r|}{2^r}  . 
 \end{equation}
If $ p_r  = \epsilon \, 2^{-cr} $ we 
have $ | \langle \delta_j , \Delta \delta_k \rangle  | =  \frac{\epsilon}{1-2^{-1-c}}  \, [2^{-d(j,k)}]^{1+c} $, from which we identify $ 1+c $ as the decay exponent in this long-range hopping model. From this point of view the condition $ c > 0 $ corresponds to summable decay. 
\end{enumerate}

The hierarchical Anderson model refers to the random Hamiltonian
\[H(\omega) =  \Delta + V(\omega),\]
on $ \ell^2 = \ell^2(\mathbb{N}_0) $ where $ V(\omega) \delta_k = V_k(\omega) \delta_k $ is given in terms of  independent random variables $\{V_k\}$ with a common density $\varrho \in L^\infty$. We will assume that $\varrho$ decays at least as fast as a Cauchy distribution, which means that there exists some $C < \infty$ such that
\begin{equation} \label{rhodecay}
\varrho(v) \leq \frac{C}{1+v^2}
\end{equation}
for all $ v \in \mathbb{R} $.
Thus $H$ depends essentially on two parameters: the sequence $ \p \in \ell^1 $ and the single-site density $\varrho \in L^1\cap L^\infty$. 
Inspired by \cite{PhysRevB.88.045103,monthusgarel}, our analysis of this model is based on the renormalization group transformation
\begin{equation} \label{eq:rgdefinition}
  \mathcal{R} (\p, \varrho) = \left( (p_{r+1})_{r\geq 1} , T_{p_1} \varrho \right)
\end{equation}
on the parameter space $\ell^1 \times L^1$. The operator $T_p $ in~\eqref{eq:rgdefinition} maps a probability density $\varrho$ to the probability density of the random variable
\[\left(\frac{1}{2V} + \frac{1}{2 V^\prime} \right)^{-1} + p,\]
where $V$ and $V^\prime$ are independent copies of random variables with density $\varrho$. As will be explained in detail in Sections~\ref{renormalizationsection} and~\ref{localizationsection} below, the renormalization group $ \mathcal{R} $, when implemented on the level of operators $ H $, drives towards the large disorder regime since it reduces the strength of the Laplacian relative to the random potential. 
The full analysis hinges on the following non-concentration hypothesis of the density $\varrho$ under the renormalization dynamics. 
\begin{assumption} \label{nonconcentration} Let $I \subset \rr$ and set $ \varrho_E = \varrho( \cdot + E ) $. There is  $\delta > 0$ such that
\[ \sup_{E \in I }  \|T_{ p_r} ... T_{  p_1} \varrho_E\|_\infty = \mathcal{O} (2^{(c- \delta) r}) \]
(using Landau's $ \mathcal{O} $ notation).
\end{assumption}
We will show in Appendix~\ref{densityappendix} that  Assumption~\ref{nonconcentration} is satisfied with $I = \rr$ whenever $|p_r| \le \epsilon 2^{-cr} $ for some arbitrary $ \epsilon > 0 $  and one of the following is true:
\begin{itemize}
\item spectral dimension $ d_s < 2 \; $ i.e.\ $c > 1$,
\item $V$ has a Gaussian distribution and $ d_s < 4\; $  i.e.\ $c > 1/2$,
\item $V$ has a Cauchy component and arbitrary $ d_s \in (0, \infty) \; $ i.e.\  $c > 0$.
\end{itemize}
The second point refutes the idea that the spectral dimension $ d_s = 2 $ is special also in the Gaussian case. Strictly speaking, it leaves open the possibility of a phase transition at $ d_s = 4 $ as suggested by A.~Bovier~\cite{MR1063180}. However, simple numerical simulations in the Gaussian case (and beyond) indicate that Assumption~\ref{nonconcentration} is satisfied generally (for energies $ E $ in the spectrum) provided only $c > 0$. A proof of this general fact has remained elusive. We hope that our present work will stimulate further research in this direction.

It is possible to define hierarchical Anderson models corresponding to more general hierarchical structures than the one above, see \cite{MR2352276,MR2276652}, but since the analysis of such generalizations does not require any fundamentally new ideas we restrict ourselves to the present situation here.

\section{Main Results}\label{mainresults}

Our main results will be formulated in terms of restrictions of the hierarchical Anderson Hamiltonian to finite volumes  $\Lambda \subset  \nn$. Since the bounds derived are uniform in the volume, they carry over to the infinite-volume Hamiltonian. More specifically,
we consider $H_\Lambda = 1_\Lambda H 1_\Lambda$ where $1_\Lambda$ is the indicator function of $\Lambda$. This operator acts naturally on both $\ell^2$ and $\ell^2(\Lambda)$ and, as both perspectives are useful, we shall switch back and forth without further comment in the future. Most important is $\Lambda = B_n(0) =\vcentcolon B_n$ in which case we will denote $H_{B_n}$ simply by
\begin{equation}
H_n(\omega) = \sum_{r=1}^n p_r E_r 1_{B_n} + \alpha_n  \ketbra{\varphi_n} + V(\omega)\ 1_{B_n}  \, .  
\end{equation}
Here $\alpha_n \vcentcolon = \sum_{r > n} 2^{n-r} p_r$  and $   \ketbra{\varphi_n} = 1_{B_n} E_n 1_{B_n}  $ is the orthogonal projection onto the maximally delocalized vector $ \varphi_n \in \ell^2(B_n)$. Note that the factor $ \alpha_n  \ketbra{\varphi_n} $ stems from the terms $ r \geq n+1 $ in the sum in~\eqref{eq:Lapl}. It is customary to also consider $ H_{n,n} = \sum_{r=1}^n p_r E_r 1_{B_n}  + V\ 1_{B_n}  $ (which we will occasionally do in the proof) and this amounts only to a redefinition of $ p_{n} $.

The first main result of this paper concerns localization in terms of the eigenfunction correlator (EC). For any self-adjoint operator $ H $, the EC is the total variation of its spectral measure of $\delta_j$ and $\delta_k$ restricted to $I \subset \rr$, i.e.
\[Q(j,k; I) = \sup \, |\langle \delta_k, f(H) \delta_j \rangle | ,\]
the supremum being taken over all $f \in C_0$ with $\supp f \subset I$ and $\|f\|_\infty \le 1$. In finite volume, for which the spectrum $ \sigma(H_n) $ of $ H_n $ is known to comprise  almost surely of finitely many non-degenerate eigenvalues, the eigenfunction correlator turns out to be $ Q_n(j,k;I) = \sum_{E \in \sigma(H_n)} | \psi_{E,n}(j)| |\psi_{E,n}(k)| $, where
$ \psi_{E,n} $ denotes the normalized eigenfunction of $ H_n $  corresponding to $E $.

\begin{theorem}[EC localization] \label{localizationthm} If Assumption~\ref{nonconcentration} is satisfied in a bounded interval $I \subset \rr$, then there exist $C ,\mu \in (0,\infty) $ such that
\begin{equation}\label{eq:ECloc} \sup_{n \in \mathbb{N} } \sup_{j \in \nn}   \sum_{k \in \nn} 2^{\mu d(j,k) } \ee[ Q_n(j,k; I)] \leq C | I | .
\end{equation}
\end{theorem}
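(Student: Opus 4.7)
The plan is to iteratively apply the Feshbach--Schur reduction underlying the RG map $\mathcal{R}$ to the off-diagonal Green's function of $H_n$, extract exponential decay from the resulting telescoping product, and then convert the fractional-moment bound into the EC estimate via a standard Wegner-based route.

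Since $\varrho \in L^\infty$ supplies a uniform Wegner bound, a standard argument (see, e.g., the monograph of Aizenman and Warzel) yields
\[
\ee\bigl[Q_n(j, k; I)\bigr] \leq C_s \int_I \ee\bigl[|G_n(j, k; E + i 0)|^s\bigr]\, dE
\]
for some $s \in (0, 1)$, so the task reduces to showing $\ee[|G_n(j, k; E + i\eta)|^s] \leq C\, 2^{-\nu d(j, k)}$ uniformly in $n$ and $\eta > 0$ for some $\nu > 1$ (so that the volume growth $|\{k : d(j, k) = d\}| = 2^{d-1}$ is dominated by the decay). Decomposing $\ell^2(B_n)$ into symmetric modes $u_{j'} = 2^{-1/2}(\delta_{2j'} + \delta_{2j'+1})$ and antisymmetric modes $w_{j'}$, the Feshbach--Schur complement on the symmetric sector equals $H^{(1)}_{n-1}(z) - z$ (the renormalized operator from $\mathcal{R}$), and reading off the off-diagonal of the Green's function in the original basis yields, for $d(j, k) \geq 2$,
\[
G_n(j, k; z) = 2\, g^{(0)}_j\, g^{(0)}_k\, G^{(1)}_{n-1}\bigl(\lfloor j/2\rfloor, \lfloor k/2\rfloor; z\bigr),\qquad g^{(0)}_x = \frac{V^{(1)}_{\lfloor x/2\rfloor}(z) - p_1 - z}{2(V_x - z)}.
\]

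Iterating this identity $d = d(j, k) - 1$ times and telescoping the products in each branch (one checks inductively that $\prod_{r=0}^{R-1} g^{(r)}_{x^{(r)}}$ collapses to $(V^{(R)}_{x^{(R)}}(z) - p_R - z)/(2^R(V_x - z))$ times $\prod_{r=1}^{R-1}(1 - p_r/(V^{(r)}_{x^{(r)}}(z) - z))$) gives
\[
G_n(j, k; z) = \frac{1}{2^{d-1}} \cdot \frac{\bigl(V^{(d-1)}_{j^{(d-1)}}(z) - p_{d-1} - z\bigr)\bigl(V^{(d-1)}_{k^{(d-1)}}(z) - p_{d-1} - z\bigr)}{(V_j - z)(V_k - z)} \cdot \prod_{r = 1}^{d - 2} \Xi^{(r)}_j\, \Xi^{(r)}_k \cdot G^{(d-1)}\bigl(j^{(d-1)}, k^{(d-1)}; z\bigr),
\]
with $x^{(r)} = \lfloor x/2^r\rfloor$ and $\Xi^{(r)}_x = 1 - p_r/(V^{(r)}_{x^{(r)}}(z) - z)$; the terminating same-pair Green's function is handled by an explicit rank-one Krein calculation. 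Taking $s$-th moments with $s$ close to $1$, the deterministic prefactor contributes $2^{-s(d-1)}$; the boundary ratio $|V^{(d-1)}_{x^{(d-1)}}(z) - p_{d-1} - z|^s/|V_x - z|^s$ gains additional decay from the concentration of the renormalized density under RG, quantified by Assumption~\ref{nonconcentration} through $\|\varrho^{(r)}\|_\infty = O(2^{(c - \delta) r})$; each telescoped factor obeys $|p_r|^s\, \ee[|V^{(r)}_{x^{(r)}}(z) - z|^{-s}] \leq C \epsilon^s\, 2^{r(c(1-s) - \delta)}$, which is summable in $r$ once $c(1-s) < \delta$, giving a uniformly bounded product; and the same-pair Green's function is bounded by its Krein formula together with the Wegner estimate. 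Assembling these contributions yields $\ee[|G_n|^s] \leq C \cdot 2^{-\nu d(j, k)}$ with $\nu > 1$, which combined with the Wegner reduction above gives~\eqref{eq:ECloc}.

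The principal obstacle is the nontrivial dependence among the factors along the Feshbach tower: the renormalized potentials $V^{(r)}_{x^{(r)}}$ at successive scales are all functions of the original potentials on $B_{d-1}(j)$ and $B_{d-1}(k)$, coupled through the harmonic-mean maps $T_{p_r}$. A naive factorization of the expectation is therefore unavailable, and one must instead condition level-by-level on the tower $(V_x, V^{(1)}_{x^{(1)}}, \ldots, V^{(d-1)}_{x^{(d-1)}})$ along each branch, treating the fresh randomness at each step as the partner potential on the appropriate $T_{p_r}$-fiber. Uniform control of the resulting fiberwise density, which is precisely what Assumption~\ref{nonconcentration} furnishes, is what allows the per-level bounds to be strung together without losing the telescoping structure.
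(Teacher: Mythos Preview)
Your reduction to fractional moments and your single-step Feshbach identity are both correct and match the paper, but the heart of the argument --- controlling the $s$-th moment of the telescoped product $\prod_r \Xi^{(r)}_j \Xi^{(r)}_k$ together with the boundary ratios and the terminal Green function --- does not go through as written. The variables $V^{(r)}_{x^{(r)}}$ along one branch form a Markov tower in which each level is a deterministic function of the previous level and one fresh partner, so for fixed $V^{(r-1)}_{x^{(r-1)}}$ the conditional law of $V^{(r)}_{x^{(r)}}$ is \emph{not} $T_{p_r}\cdots T_{p_1}\varrho_E$; Assumption~\ref{nonconcentration} bounds only the marginal $\|\varrho^{(r)}\|_\infty$, not these fibrewise densities. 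Consequently the per-level estimate $|p_r|^s\,\ee[|V^{(r)}_{x^{(r)}}-z|^{-s}]\le C\,2^{-s\delta r}$ cannot be multiplied across $r$, and your ``condition level-by-level'' remedy yields conditional moments that are not controlled by the assumption. Since the number of factors is $d(j,k)$, any uncontrolled constant per level destroys the exponential decay.

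The paper avoids this problem by a different mechanism. It replaces the potentials at the finitely many sites in $B_N(0)\cup B_N(k)$ by Cauchy variables via the domination $\varrho_E\le C_I P_z$ (losing only a fixed factor $C_I^{2|B_N|}$), then uses the Cauchy decoupling inequality to prove the one-step bound $\ee|G_n(0,2k;0)|^s\le D_s(z)^4\,\ee|\rg G_{n-1}(0,k;0)|^s$ (Theorem~\ref{fminequality}), which absorbs the correlated prefactors $V_{2\ell+1}/(V_{2\ell}+V_{2\ell+1})$ at a fixed multiplicative cost. Crucially this is iterated only a \emph{fixed} number $N$ of times, chosen so that $\epsilon_N\sup_i\|\varrho_i^{(N)}\|_\infty<\epsilon_0$ (here Assumption~\ref{nonconcentration} enters, through the marginal densities as stated); the remaining decay $2^{-(1+\mu)d(0,\lfloor 2^{-N}k\rfloor)}$ then comes from the standard high-disorder fractional-moment bound (Proposition~\ref{highdisorder}), which is proved by resolvent expansion rather than further renormalization. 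Your scheme of iterating all the way down to scale $d(j,k)$ would require either a decoupling device valid for the successive renormalized densities or uniform control of the conditional fibre densities, neither of which is available.
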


The proof of this theorem can be found in Section~\ref{localizationsection}.

By the lower semicontinuity property $Q(j,k; I) \leq \liminf_{n\to \infty}   Q_n(j,k; I) $ for any open $ I \subset \mathbb{R} $, the result~\eqref{eq:ECloc} 
extends to the eigenfunction correlator for the infinite volume (cf.~\cite[Ch.~7]{MR3364516}). As a consequence, there exists some $C > 0$ such that
\[  \sum_{k : d(j,k) \geq R} \ee |\langle \delta_k, 1_I(H) e^{itH} \delta_j \rangle | ^2 \leq C \, 2^{-\mu R},\]
which shows that the quantum probability that a particle, which was started at $j \in \nn$ and subsequently filtered by energy, ever leaves $B_R(j)$ decays exponentially in $R$. 
In particular, this implies that the spectrum of $H$ is almost surely of pure-point type and that the corresponding normalized eigenfunctions decay exponentially with respect to the hierarchical metric. More precisely, there is some random amplitude $ A(\omega) \in (0,\infty) $ with $ \mathbb{E}[A] < \infty $ such that the normalized eigenfunctions $ \psi_E $ of $ H $  almost surely satisfy 
\[
| \psi_E(j) | \, | \psi_E(k) | \leq  A \, 2^{-\mu d(j,k) }
\]
for all $ E \in I $ and $ j,k \in \nn $, cf.\ \cite[Ch.~7]{MR3364516}.

As an aside, we note that by extending an argument of E.~Kritchevski~\cite{MR2352276}, the spectral statement may also be established without Assumption~ \ref{nonconcentration}.
\begin{proposition}[Spectral localization]\label{prop:localization}
The  spectrum of $H $ is almost surely of pure-point type with normalized eigenfunctions satisfying
\begin{equation}\label{eq:Kritloc}
\sum_{k \in \nn} 2^{\frac{c}{4} d(0,k)} |\psi_E(k)|^2 < \infty 
\end{equation}
for any $ E \in \sigma(H) $. 
\end{proposition}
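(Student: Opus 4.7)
The plan is to adapt Kritchevski's approach from \cite{MR2352276}, which exploits the self-similar rank-one structure of the finite-volume Hamiltonians to inductively control eigenfunctions at successive hierarchical scales. The starting point is the observation that
\[
H_{n,n} = \bigl(H_{n-1,n-1}^{(L)} \oplus H_{n-1,n-1}^{(R)}\bigr) + p_n \ket{\varphi_n}\bra{\varphi_n},
\]
where $\varphi_n = 2^{-n/2}\,\mathbf{1}_{B_n}$ and the superscripts label the two sub-balls of radius $n-1$ constituting $B_n$. Consequently, every non-degenerate eigenvector $\psi_E$ of $H_{n,n}$ whose eigenvalue is not shared with the direct-sum operator admits the Krein-formula representation $\psi_E \propto (H_{n-1,n-1}^{(L)} \oplus H_{n-1,n-1}^{(R)} - E)^{-1} \varphi_n$, and iterating this identity through all smaller scales yields a hierarchical product formula for $\psi_E(k)$.

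For the pure-point statement, I would apply the Simon--Wolff criterion to each single-site rank-one potential $V_k\ketbra{\delta_k}$: the spectral measure of $\delta_k$ under $H$ is pure-point almost surely provided $\sum_j |G(k,j; E+i0)|^2 < \infty$ for Lebesgue-a.e.\ $E$. The Cauchy-type bound \eqref{rhodecay} yields a uniform-in-$E$ estimate $\ee|V_k - E|^{-s} < \infty$ for every $s<1$, and this bound can be propagated through the hierarchical scales via the rank-one Schur complement inherent in the decomposition above; Fubini then furnishes the Green's function square-summability required by Simon--Wolff.

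The decay estimate \eqref{eq:Kritloc} is obtained by bounding the Green's function factors appearing in the iterated Krein representation. At each level $r$ the rank-one correction contributes a factor of $|p_r|\leq \epsilon\,2^{-cr}$, while the resolvent entries are controlled uniformly in $r$ using \eqref{rhodecay} and an induction which ensures that the effective resolvent entries at scale $r$ have bounded fractional moments of some order $s<1$. Taking expectations and applying Chebyshev's inequality converts the cumulative $2^{-cr}$ factors into an almost-sure estimate $\sum_k 2^{(c/4)d(0,k)}|\psi_E(k)|^2 < \infty$, the rate $c/4$ reflecting two successive losses of a factor of two, one incurred in the passage to fractional moments and the other in the Borel--Cantelli tail argument.

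The main technical obstacle is the scale-uniform control of the Green's function fractional moments without invoking Assumption~\ref{nonconcentration}. The Cauchy tail in \eqref{rhodecay} must survive (in a fractional-moment sense) under iteration of the renormalization map $T_p$ of \eqref{eq:rgdefinition}; this is the place where Kritchevski's original argument for Cauchy-bounded densities enters, and also the origin of the suboptimal exponent $c/4$, rather than the sharper rate available in Theorem~\ref{localizationthm} under the stronger hypothesis of Assumption~\ref{nonconcentration}.
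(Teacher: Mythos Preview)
Your outline correctly identifies the rank-one hierarchical structure and the Simon--Wolff criterion as the route to pure-point spectrum, and that is indeed what the paper does. However, the mechanism you propose for the scale-by-scale control --- fractional moments of resolvent entries propagated through the renormalization map $T_p$ --- is not what makes the argument work, and in fact this is where your proposal has a real gap.

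The difficulty you flag yourself (``the Cauchy tail in \eqref{rhodecay} must survive under iteration of $T_p$'') is precisely the content of Assumption~\ref{nonconcentration}, which this proposition is supposed to avoid. Controlling fractional moments of the renormalized Green function uniformly in the scale requires a bound on $\|T_{p_r}\cdots T_{p_1}\varrho_E\|_\infty$, and nothing in \eqref{rhodecay} alone guarantees this. So your proposed route either circularly reintroduces Assumption~\ref{nonconcentration} or leaves the central estimate unproved.

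The paper's proof sidesteps this entirely. The quantity it actually controls is
\[
Q_r(j,E) = \langle \varphi_r(j), (H_{\infty,r}-E)^{-1}\varphi_r(j)\rangle,
\]
which is the boundary value of the Borel transform of a probability measure (the spectral measure of $\varphi_r(j)$). Boole's inequality then gives the \emph{distribution-free} level-set bound
\[
(\pp\otimes m)\bigl(|Q_r(j,E)| > 2^{cr/2}\bigr) = 2\cdot 2^{-cr/2},
\]
and Borel--Cantelli yields $|Q_r(j,E)| \le C(\omega,E)\,2^{cr/2}$ for $\pp\otimes m$-a.e.\ $(\omega,E)$. Combined with $|p_r|\le\epsilon\,2^{-cr}$ this gives $\sum_r |p_r Q_r| < \infty$, which is exactly the summability condition needed to feed into Kritchevski's Propositions~2.3--2.4 and conclude. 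No information about the renormalized densities is used; only that $Q_r$ is a Herglotz boundary value. This is the missing idea in your proposal, and it is also why the result holds for all $c>0$ rather than only the regimes Kritchevski originally covered.
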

The proof of this result is the subject of Appendix~\ref{Kritchevski}.

Eigenfunction correlator localization~\eqref{eq:ECloc} is a much stronger result than just~\eqref{eq:Kritloc}. In particular, it allows us to draw conclusions about 
the inverse participation ratios (IPRs)
\[
P_q(\psi) \vcentcolon= \frac{\sum_x |\psi(x)|^{2q}}{\left[ \sum_x |\psi(x)|^2 \right]^q } = \frac{\| \psi \|_{2q}^{2q}}{\| \psi \|_2^{2q}}  .
\]
The IPRs are comparable for different values of   $q \geq \frac{1}{2}$, e.g.:
\begin{itemize}
\item for any $ q \geq \frac{1}{2}$:
\begin{equation}
1 \leq P_q(\psi) \, [P_{\frac{q}{2q-1}}(\psi)]^{2q -1} \, ,
\end{equation}
\item for any $ q \geq 1 $ we have 
$ r(\psi) ^{2q} \ 
\le \   P_q (\psi)
\  \le \    r(\psi) ^{2(q-1)} $ where 
 $
r(\psi)  =  \|\psi \|_\infty / \|\psi \|_2   \,  
$
(with $\|\psi\|_\infty  = \sup_{x} |\psi(x) |$). 
\end{itemize}
It therefore remains only to state a result concerning the most prominent case $ q = 2 $. Note that a bound $ P_2(\psi)>\varepsilon^4  $ with $ \varepsilon > 0 $ (independent of $ n $) is a localization statement for $ \psi $.

\begin{corollary}[IPRs]\label{cor:IPR}
 If Assumption~\ref{nonconcentration} is satisfied in a bounded open interval $I \subset \rr$, then there exists some $ C < \infty $ such that for any $ E \in I $ and $ W , \varepsilon > 0 $
 
\begin{equation}\label{eq:IPR}
 \mathbb{P}\left(\begin{array}{c} \mbox{There is $ \psi \in \ell^2(B_n) $ with $ H_n \psi = \lambda \psi $ and} \\ \mbox{$\; | \lambda - E | \leq 2^{- n- 1} W $ such that $\;  P_2( \psi) \leq \varepsilon^4 $} \end{array} \quad \right) \leq  C \, W \varepsilon  
 \end{equation}
 for all $ n \in \nn $.
\end{corollary}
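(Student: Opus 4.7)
The plan is to reduce the probability to an expectation via Markov's inequality and then convert the delocalization hypothesis into a form amenable to Theorem~\ref{localizationthm}. Set $J = [E - 2^{-n-1}W, E + 2^{-n-1}W]$, so $|J| = 2^{-n}W$, and let $N(\omega)$ count the eigenvalues $\lambda \in \sigma(H_n) \cap J$ whose normalized eigenfunction $\psi_\lambda$ satisfies $P_2(\psi_\lambda) \leq \varepsilon^4$. The event in the corollary is $\{N \geq 1\}$, so by Markov it suffices to prove $\ee[N] \leq C W \varepsilon$. I may assume $W$ is small enough that $J \subset I$, since the claimed bound is trivial once $C W \varepsilon \geq 1$.

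For each such ``delocalized'' eigenvector, $P_2(\psi_\lambda) = \|\psi_\lambda\|_4^4 \leq \varepsilon^4$ forces $\|\psi_\lambda\|_\infty \leq \|\psi_\lambda\|_4 \leq \varepsilon$, and the normalization bound $1 = \|\psi_\lambda\|_2^2 \leq \|\psi_\lambda\|_\infty \|\psi_\lambda\|_1$ then yields $\|\psi_\lambda\|_1 \geq 1/\varepsilon$. Dropping the delocalization constraint, this gives
$$N \leq \varepsilon \sum_{\lambda \in \sigma(H_n) \cap J} \|\psi_\lambda\|_1 = \varepsilon \sum_{j \in B_n} \sum_{\lambda \in \sigma(H_n) \cap J} |\psi_\lambda(j)|.$$

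To introduce the eigenfunction correlator I would use the companion lower bound $\|\psi_\lambda\|_1 \geq \|\psi_\lambda\|_2 = 1$ (monotonicity of $\ell^p$ norms in counting measure) to write $|\psi_\lambda(j)| \leq |\psi_\lambda(j)| \|\psi_\lambda\|_1 = \sum_k |\psi_\lambda(j)| |\psi_\lambda(k)|$. Summing over $\lambda \in J$ gives $\sum_{\lambda \in J}|\psi_\lambda(j)| \leq \sum_k Q_n(j,k;J)$, so
$$N \leq \varepsilon \sum_{j,k \in B_n} Q_n(j,k;J).$$
Taking expectations and applying Theorem~\ref{localizationthm} to $J \subset I$ (Assumption~\ref{nonconcentration} is inherited with the same constants), I obtain $\sum_k \ee[Q_n(j,k;J)] \leq C|J|$, and summing over the $|B_n| = 2^n$ sites yields $\ee[N] \leq \varepsilon \cdot 2^n \cdot C \cdot 2^{-n} W = C W \varepsilon$.

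The only nontrivial observation is the joint use of the two bounds on $\|\psi_\lambda\|_1$: the delocalization bound $\|\psi_\lambda\|_1 \geq 1/\varepsilon$ supplies the $\varepsilon$ gain, while the trivial $\|\psi_\lambda\|_1 \geq 1$ upgrades $\sum_j |\psi_\lambda(j)|$ into the eigenfunction correlator $Q_n(j,k;J)$ so that Theorem~\ref{localizationthm} can be applied. Once this trick is identified the computation is routine, so I do not anticipate a substantial obstacle beyond tracking that the constants in Theorem~\ref{localizationthm} can indeed be taken uniformly for all subintervals $J \subset I$.
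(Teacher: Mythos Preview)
Your proof is correct and follows essentially the same route as the paper's: both reduce to bounding $\varepsilon\,\ee\sum_{\lambda\in J}\|\psi_\lambda\|_1$ and then insert a factor of $1$ (you use $1\le\|\psi_\lambda\|_1$, the paper uses $1=\|\psi_\lambda\|_2^2$) to upgrade the $\ell^1$-sum into a double sum of eigenfunction correlators, after which Theorem~\ref{localizationthm} applied to $J\subset I$ yields $CW\varepsilon$. The only cosmetic difference is that the paper extracts the $\varepsilon$ via the H\"older bound $1/\|\psi_\lambda\|_4\le\|\psi_\lambda\|_1$ and Markov on $\sum_\lambda P_2(\psi_\lambda)^{-1/4}$, whereas you go through $\|\psi_\lambda\|_\infty\le\|\psi_\lambda\|_4$; both are the same inequality in disguise.
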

The proof of this corollary, which in fact does not rely on any special structure of the hierarchical model, is spelled out in Appendix~\ref{IPR}. 

In order to appreciate this result, we stress that the smallness of the probability in~\eqref{eq:IPR} is not due to the fact that the interval $ I_n = E +  2^{-n-1}   [ - W ,  W ] $ is typically void of eigenvalues. In fact, as is proven in Theorem~\ref{poissonstatisticsthm} below
\[ 
\lim_{n\to \infty} \mathbb{P}\left( \mbox{No eigenvalue of $ H_n $ in $ I_n $}\right) = \exp\left( -  \nu(E) W \right) 
\]
at all Lebesgue points $ E\in I $ of the (infinite-volume) density of states $ \nu $ (see also~\eqref{eq:DOSm}).

The above corollary is particularly interesting since it challenges the conclusions of~ \cite{PhysRevB.88.045103}, in which numerical findings suggested, for $ c \in (0,1) $ (i.e. $ d_s > 2 $), the existence of an energy  at which the inverse participation ratio $P_2(\psi)$ vanishes for large $ n $ in the case of a weak Gaussian random potential. In fact, the authors of~\cite{PhysRevB.88.045103} study the averaged IPR
\[\Pi_n(I) = \frac{\ee \sum_{\lambda \in \sigma(H_n) \cap I} \|\psi_\lambda\|_4^4}{\ee \sum_{\lambda \in \sigma(H_n) \cap I} 1}
\]
in the limit of vanishingly small intervals $ I \subset \mathbb{R} $.  From~\eqref{eq:IPRproof} in Appendix~\ref{IPR} one concludes
\begin{equation}\label{eq:avgiprbound}
\Pi_n(I) \geq C^{-4} \left( \frac{\nu_n(I)}{|I|} \right)^4  \qquad\mbox{with $ \nu_n(I) = \ee \langle \delta_0, 1_I(H_n) \delta_0\rangle $}
\end{equation}
 for any bounded $ I \subset \mathbb{R} $ in which Assumption~\ref{nonconcentration} is valid.
Since the finite-volume density of states is bounded away from zero for all large enough $ n $ provided the interval $ I $ is strictly contained in the infinite-volume spectrum $ \sigma(H) $,
the right side can be shown to be strictly positive in the limit $ n \to \infty $ (cf.~ \cite{MR2399055}). 
In particular, if $ \varrho $ is a Gaussian distribution, $ \sigma(H) = \mathbb{R} $, and this applies to all energies, which contradicts the conclusions in \cite{PhysRevB.88.045103}.

The second main result of this paper concerns the level statistics of $H$, i.e., the statistical behavior of the rescaled eigenvalues of $H_n$ in the infinite-volume limit $ n \to \infty $. 
More precisely, we show that the random measure 
\[\mu_n(f) = \sum_{\lambda \in \sigma(H_n)} f(2^n(\lambda - E)), \quad \mbox{with $ f \in C_0$,} \]
corresponding to $ E \in \mathbb{R} $ converges to a Poisson point measure with intensity given by the density of states $\nu(E)$ of $H$. The latter is the derivative of the density of states measure 
\begin{equation}\label{eq:DOSm}
\nu(f) = \ee \langle \delta_0, f(H) \delta_0 \rangle \, .
\end{equation}
Note that the Wegner estimate, $|\nu(f) | \leq \| \varrho \|_\infty \| f \|_1 $ (cf.~\cite{MR3364516}), ensures that $ \nu $ is absolutely continuous, i.e., of the form $ \nu(f) = \int f(\lambda) \, \nu(\lambda) \, d\lambda $. 
\begin{theorem}[Poisson statistics]\label{poissonstatisticsthm} Suppose Assumption~\ref{nonconcentration} is satisfied in an open set $I \subset \rr$ and $E \in I$ is a Lebesgue point of $\nu$. Then the rescaled eigenvalue point process $ \mu_n$ 
converges in distribution to a Poisson point process with intensity $\nu(E)$ as $n \to \infty$.
\end{theorem}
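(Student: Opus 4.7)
My plan is to establish the Poisson limit by the standard two-step route dictated by Kallenberg's criterion for simple point processes on $\rr$: verify convergence of the intensity, and then decouple $\mu_n$ into a null array of asymptotically independent Bernoulli-like contributions. Both ingredients use the eigenfunction correlator localization of Theorem~\ref{localizationthm} once Assumption~\ref{nonconcentration} is in force.

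For the intensity, the Wegner bound $\nu(f) \le \|\varrho\|_\infty \|f\|_1$ recorded just above the statement shows that $\nu$ is absolutely continuous with density bounded by $\|\varrho\|_\infty$. Writing
\[ \ee \mu_n(f) = \tr \ee f(2^n(H_n - E)) = 2^n \int f(2^n(\lambda - E))\,\nu_n(\lambda)\,d\lambda \]
in terms of the finite-volume density of states $\nu_n$, a change of variables combined with the Wegner-based finite-volume approximation $\nu_n \to \nu$ and the Lebesgue-point assumption at $E$ yields $\ee \mu_n(f) \to \nu(E) \int f$.

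For the decoupling, partition $B_n$ into its $2^{n-\ell}$ sub-balls $\{B_\ell^{(i)}\}$ with $\ell = \ell(n)$ chosen so that $\ell \to \infty$ and $n-\ell \to \infty$ at suitable rates, and compare $\mu_n$ with $\tilde\mu_n = \sum_i \mu_\ell^{(i)}$ built from the mutually independent restrictions $H_\ell^{(i)}$. Since the inter-block coupling $\sum_{r>\ell} p_r E_r - \alpha_\ell \sum_i \ketbra{\varphi_\ell^{(i)}}$ has operator norm only $\mathcal{O}(2^{-c\ell})$, which generically exceeds the window scale $2^{-n}$, the comparison cannot be obtained from norm estimates; instead one uses a resolvent identity together with the exponential decay of $Q_n(j,k;I)$ in $d(j,k)$ furnished by~\eqref{eq:ECloc} to show that $|\mu_n(f) - \tilde\mu_n(f)| \to 0$ in probability for $f \in C_0$. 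A Minami-type estimate $\ee \mu_\ell^{(i)}(I)(\mu_\ell^{(i)}(I) - 1) \le C (2^\ell |I|)^2$---derived by one Feshbach-Krein-Schur reduction to an effective two-site problem combined with the quantitative non-concentration bound of Assumption~\ref{nonconcentration} applied to the joint density of the resulting pair of eigenvalues---then ensures that each $\mu_\ell^{(i)}$ contributes at most one point in a window of scale $2^{-n}$ with probability approaching one. The classical Poisson limit for null arrays of independent Bernoulli point processes identifies $\tilde\mu_n$, and hence $\mu_n$, as Poisson of intensity $\nu(E)$.

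The main difficulty is the balance of scales in the decoupling: $\ell$ must be large enough for the exponential decay in Theorem~\ref{localizationthm} to beat the effect of the long-range hopping accumulated over the $2^{n-\ell}$ blocks, yet small enough that the number of blocks still tends to infinity. The exponential-in-$d(j,k)$ nature of~\eqref{eq:ECloc} is essential here. A secondary obstacle is the Minami estimate, which unlike in the lattice case does not follow from Wegner-type spectral averaging alone; it requires an extra renormalization step together with the $2^{-\delta r}$ gain in Assumption~\ref{nonconcentration} applied to the two-site problem emerging after Feshbach-Krein-Schur reduction.
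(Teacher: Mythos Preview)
Your outline follows the standard ``Minami route'' for lattice models, but the paper's proof proceeds quite differently and does \emph{not} use the eigenfunction correlator localization of Theorem~\ref{localizationthm} at all. The decoupling is obtained instead from the exact rank-one hierarchical structure: one compares $\mu_n$ with the process $\tilde\mu_n$ built from the \emph{truncated} operators $H_{n,m_n} = \bigoplus_j H_{m_n,m_n}^{(j)}$, and the comparison $\ee|\nu_n(P_{z_n}) - \nu_{n,m}(P_{z_n})|$ is controlled by a telescopic sum in which each step $H_{n,k-1}\to H_{n,k}$ is a rank-one perturbation by $p_k\ketbra{\varphi_k}$. Aronszajn--Donoghue interlacing reduces this to bounding $\pp(|F_k(t)|\ge |p_k|^{-1})$, and here Assumption~\ref{nonconcentration} enters \emph{directly} through Corollary~\ref{fdistribution}: the density of $\Phi_{k-1}(t)$ is exactly $T_{p_{k-1}}\cdots T_{p_1}\varrho_t$, giving the crucial $2^{-\delta k}$ gain. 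This is the content of Lemma~\ref{fnlemma} and Theorem~\ref{approximationbynunm}, and it works for every $c>0$ without any appeal to Green-function decay.

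Your proposal has two concrete gaps. First, the intensity step: weak convergence $\nu_n\to\nu$ does not by itself yield $\ee\mu_n(f)=2^n\ee\nu_n(f(2^n(\cdot-E)))\to\nu(E)\int f$, because the test function shrinks with $n$; one needs a quantitative bound on $|\ee\nu_n-\nu|$ at scale $2^{-n}$, which in the paper again comes from Theorem~\ref{approximationbynunm} (see Corollary~\ref{intensity}). Second, the Minami estimate requires no Feshbach reduction and no use of Assumption~\ref{nonconcentration}: the standard Combes--Germinet--Klein bound $\pp(\tr 1_B(H_{m,m})\ge\ell)\le (C\,2^m|B|)^\ell/\ell!$ holds for any random Schr\"odinger operator with bounded single-site density, and this is all that is used (equation~\eqref{cgkbound}). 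Whether your EC-based decoupling could be made to work is not obvious---the inter-block coupling in the hierarchical model is not localized at a boundary, and for $c<1$ the available decay exponent $\mu<c$ in~\eqref{eq:ECloc} may be too weak to beat the $2^{-n}$ window across $2^{n-\ell}$ blocks---but in any case it is not the mechanism the paper uses.
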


Let us conclude this section by summarizing the results above and placing them into context. As mentioned before, the first proof of complete spectral localization for the hierarchical model at any $ c > 0 $ in the special case of a Cauchy probability density is due to  S.~Molchanov \cite{MR1463464}. 
Later, E.~Kritchevski showed \cite{MR2352276,MR2413200} that $H$ almost surely has only pure-point spectrum provided that either $V$ has a Cauchy component or $c > 1/2$ and that the rescaled eigenvalues converge to a Poisson point process if $c > 2$. 
Proposition~\ref{prop:localization} gets rid of any conditions on the distribution or the spectral dimension and may be considered optimal as far as spectral localization is concerned. In
addition, we establish dynamical localization and  information on the IPRs under Assumption~\ref{nonconcentration}  on the renormalized densities  $T_{p_r} ... T_{p_1} \varrho_E$. We believe  this assumption to be 
generally valid irrespective of the spectral dimension and verify it in special cases in Appendix~\ref{densityappendix}. 
Theorem~\ref{poissonstatisticsthm} significantly enlarges the parameter range in which the level statistics are known to converge to a Poisson point process. 
In essence, our results support the conjecture that there is no delocalization transition in the hierarchical Anderson model for any $ c > 0 $, i.e., not even for spectral dimensions $ d_s >2 $.

Another toy model to study effects of dimensions on the Anderson transition has recently been analyzed by C.~Sadel~\cite{Sadel2016}. The antitrees studied there are characterized by a tunable dimension $ d $ which measures surface-to-volume growth rates in these graphs. It is proven that a spectral transition occurs at $ d = 2 $.

\section{The Renormalization Group} \label{renormalizationsection}
The content of this section is the investigation of a relationship between the resolvents of $H$ and the resolvents of an operator $\rg H$ whose parameters $(\p, \varrho)$ have effectively been renormalized to $\mathcal{R} (\p, \varrho)$ (cf.~\eqref{eq:rgdefinition}). This is achieved by considering the new Hamiltonian $\rg H = \rg \Delta + \rg V$ with components
\beq{newlaplacian}\rg \Delta = \sum_{r = 1}^\infty p_{r+1} E_r
\eeq
and
\beq{newpotential}(\rg V)_k = \left(\frac{1}{2V_{2k}} + \frac{1}{2V_{2k+1}} \right)^{-1} +  p_1.
\eeq
The definition~\eqref{newpotential} guarantees that the renormalized potential $\rg V$ consists of independent random variables whose common density is $T_{p_1} \varrho$. 

It will be useful in our proof of Theorem~\ref{localizationthm} to also consider a slightly more general situation, in which the disorder remains independent, but is allowed to have different distributions at different sites. We will thus suppose that $V_k \sim \varrho_k$, where $\{\varrho_k \, | \, k \in \nn\}$ is a collection of probability densities with $\sup_k \|\varrho_k\|_\infty < \infty$. The renormalization transformations~\eqref{newlaplacian} and~\eqref{newpotential} extend to this setting directly, the only difference being that the renormalized potential values $(\rg V)_k$ are now drawn from the densities $ T_{p_1}(\varrho_{2k}, \varrho_{2k+1})$ of the random variables defined in~\eqref{newpotential}.

To obtain a relation between $H$ and $\rg H$, we set $L = 2^{n-1} - 1$ and consider the orthonormal basis $\{e_0, ..., e_L, f_0,...,f_L\}$ of $\ell^2(B_n)$ whose members are
\[e_k = \frac{1}{\sqrt{2}}\left( \delta_{2k} + \delta_{2k+1}\right), \quad f_k = \frac{1}{\sqrt{2}}\left( \delta_{2k} - \delta_{2k+1}\right).\]
Thus $\ell^2(B_n) = E \oplus F$, where $E$ and $F$ are the linear spans of $\{e_0, ..., e_{L}\}$ and $\{f_0, ..., f_{L}\}$, respectively. Let $U_e: \ell^2(B_{n-1}) \to E$ and $U_f: \ell^2(B_{n-1}) \to F$ be the isomorphisms defined by
\[U_e\delta_k = e_k, \quad U_f\delta_k = f_k\]
and let $U  = U_e \oplus U_f$. A direct computation shows that a matrix representation of the form
\beq{hinefbasis} U^\ast H_n U = \begin{pmatrix}(\rg \Delta)_{n-1} + p_1 + V_{ee} & \,\,V_{fe} \\ V_{ef} & \, \, V_{ff} \end{pmatrix}
\eeq
is valid in the site basis of $\ell^2(B_{n-1}) \oplus \ell^2(B_{n-1})$. The entries occurring on the right of~\eqref{hinefbasis} are the operators defined by
\beq{veedef} V_{ee}\delta_k  = V_{ff}\delta_k = \frac{1}{2} (V_{2k} + V_{2k + 1})\delta_k
\eeq
and
\beq{vefdef} V_{fe} \delta_k = V_{ef} \delta_k = \frac{1}{2} (V_{2k} - V_{2k + 1})\delta_k
\eeq
acting between the appropriate factors of $\ell^2(B_{n-1}) \oplus \ell^2(B_{n-1})$. Let us emphasize that~\eqref{veedef} cannot be taken completely literally because $V_{ee}$ maps only the first factor of  $\ell^2(B_{n-1}) \oplus \ell^2(B_{n-1})$ into itself, whereas $V_{ff}$ maps only the second factor into itself. Similar considerations apply to~\eqref{vefdef}.

The Schur complement of $(\rg \Delta)_{n-1} + p_1 +  V_{ee}$ in~\eqref{hinefbasis} is the operator 
\[(\rg \Delta)_{n-1} +  p_1 + V_{ee} - V_{fe}V_{ff}^{-1}V_{ef} = (\rg H)_{n-1}\]
and thus the Schur complement formula for the inverse yields the following proposition.

\begin{proposition} The formula
\[U^*H_n^{-1} U = \begin{pmatrix}1&0\\-V_{ff}^{-1}V_{ef}&1\end{pmatrix}\begin{pmatrix}(\rg H)_{n-1}^{-1}&0\\0&V_{ff}^{-1}\end{pmatrix}\begin{pmatrix}1&-V_{fe}V_{ff}^{-1}\\0&1\end{pmatrix}\]
is valid whenever $H_n, (\rg H)_{n-1}$ and $V_{ff}$ are invertible.
\end{proposition}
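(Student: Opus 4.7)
The plan is to apply the classical block $LDU$ factorization to the matrix representation in~\eqref{hinefbasis} and then simply invert each factor. Writing the right-hand side of~\eqref{hinefbasis} schematically as $\begin{pmatrix} A & B \\ C & D \end{pmatrix}$ with
$$A = (\rg \Delta)_{n-1} + p_1 + V_{ee}, \quad B = V_{fe}, \quad C = V_{ef}, \quad D = V_{ff},$$
one has the purely algebraic identity
$$\begin{pmatrix} A & B \\ C & D \end{pmatrix} = \begin{pmatrix} 1 & BD^{-1} \\ 0 & 1 \end{pmatrix}\begin{pmatrix} S & 0 \\ 0 & D \end{pmatrix}\begin{pmatrix} 1 & 0 \\ D^{-1}C & 1 \end{pmatrix}$$
whenever $D = V_{ff}$ is invertible, where $S = A - BD^{-1}C$ denotes the Schur complement. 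This identity follows by multiplying out, and it is meaningful in the operator-valued setting because $V_{ee}, V_{ef}, V_{fe}, V_{ff}$ act diagonally in the site basis and in particular pairwise commute.

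Next, I would invert this factorization. Both triangular factors are unipotent, so their inverses are obtained simply by flipping the sign of the off-diagonal block, and the middle block-diagonal factor inverts blockwise. Provided $H_n$, $(\rg H)_{n-1}$, and $V_{ff}$ are all invertible, this yields
$$\begin{pmatrix} A & B \\ C & D \end{pmatrix}^{-1} = \begin{pmatrix} 1 & 0 \\ -D^{-1}C & 1 \end{pmatrix}\begin{pmatrix} S^{-1} & 0 \\ 0 & D^{-1} \end{pmatrix}\begin{pmatrix} 1 & -BD^{-1} \\ 0 & 1 \end{pmatrix},$$
and substituting back the concrete expressions for $A, B, C, D$ together with the identification $S = (\rg H)_{n-1}$ reproduces the stated formula for $U^* H_n^{-1} U$.

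The only step that is specific to the hierarchical structure is the identification $S = (\rg H)_{n-1}$, already asserted in the paragraph preceding the proposition. For completeness I would verify it on site vectors: since $V_{ee}, V_{ef}, V_{fe}, V_{ff}$ are all diagonal in the basis $\{\delta_k\}$ of $\ell^2(B_{n-1})$ by \eqref{veedef} and \eqref{vefdef}, the operator $V_{ee} - V_{fe}V_{ff}^{-1}V_{ef}$ acts diagonally with entries
$$\tfrac{1}{2}(V_{2k}+V_{2k+1}) - \frac{(V_{2k}-V_{2k+1})^2}{2(V_{2k}+V_{2k+1})} = \frac{2 V_{2k} V_{2k+1}}{V_{2k}+V_{2k+1}} = \left(\frac{1}{2V_{2k}} + \frac{1}{2V_{2k+1}}\right)^{-1},$$
which together with the $p_1$ summand already present in $A$ is precisely $(\rg V)_k$ as defined in~\eqref{newpotential}. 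Combined with the kinetic part $(\rg \Delta)_{n-1}$ in $A$, this gives $S = (\rg H)_{n-1}$.

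There is no genuine obstacle here: the result is the standard Schur complement inversion applied blockwise, and the only hierarchy-specific content is the elementary harmonic-mean identity that converts the Schur complement of $V_{ff}$ into the renormalized on-site potential.
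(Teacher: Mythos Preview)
Your proof is correct and follows essentially the same approach as the paper, which simply invokes the Schur complement inversion formula after identifying the Schur complement as $(\rg H)_{n-1}$; you have just spelled out the block $LDU$ factorization and the harmonic-mean verification in more detail. One small remark: the commutativity of $V_{ee},V_{ef},V_{fe},V_{ff}$ is not actually needed for the block $LDU$ identity, which holds in any associative setting as soon as $D$ is invertible.
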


We will show in Appendix~\ref{densityappendix} that $T_p(\varrho, \tilde{\varrho}) \in L^\infty$ whenever $\varrho,  \tilde{\varrho} \in L^\infty$ so the Wegner estimate applies to both $H$ and $\rg H$. Therefore, $H_n$ and $(\rg H)_{n-1}$ are almost surely invertible and $V_{ff}$ is almost surely invertible because it is a multiplication operator whose entries are independent continuously distributed random variables. In terms of the operator
\[ S = U_e^* - V_{fe}V_{ff}^{-1} U_f^*,\]
where we have identified $U_e$ and $U_f$ with $U_e \oplus 0$ and $0 \oplus U_f$, respectively, this proves the following important formula.

\begin{corollary}\label{recoveryformula} Let $\varphi, \psi \in \ell^2(B_n)$. Then
\[\langle \varphi, H_n^{-1} \psi \rangle = \langle S\varphi, (\rg H)_{n-1}^{-1} S \psi \rangle + \langle U_f^*\varphi, V_{ff}^{-1} U_f^* \psi \rangle\]
almost surely.
\end{corollary}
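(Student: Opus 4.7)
The plan is to derive the identity as a direct quadratic-form consequence of the block-matrix formula in the preceding proposition. Since $U = U_e \oplus U_f$ is unitary from $\ell^2(B_{n-1}) \oplus \ell^2(B_{n-1})$ onto $\ell^2(B_n)$, I first rewrite
\[
\langle \varphi, H_n^{-1} \psi \rangle = \langle U^*\varphi, (U^* H_n^{-1} U)\, U^*\psi \rangle,
\]
with $U^*\varphi = (U_e^*\varphi,\, U_f^*\varphi)$ and similarly for $\psi$. The almost-sure invertibility of $H_n$, $(\rg H)_{n-1}$, and $V_{ff}$ has just been discussed (the first two via the Wegner estimate, the third because $V_{ff}$ is a multiplication operator with an a.s.\ non-vanishing symbol), so the factorization from the preceding proposition is legitimate.

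Next I would simply multiply out the three block factors against the column vector $(U_e^*\psi,\, U_f^*\psi)^T$, working from right to left. The upper-triangular factor produces
\[
\bigl(U_e^*\psi - V_{fe}V_{ff}^{-1}U_f^*\psi,\; U_f^*\psi\bigr)^T = (S\psi,\; U_f^*\psi)^T,
\]
by the very definition of $S$. The block-diagonal factor converts this to $((\rg H)_{n-1}^{-1} S\psi,\; V_{ff}^{-1} U_f^*\psi)^T$, and the lower-triangular factor then yields a vector whose first component is still $(\rg H)_{n-1}^{-1} S\psi$ and whose second component is $V_{ff}^{-1}U_f^*\psi - V_{ff}^{-1}V_{ef}(\rg H)_{n-1}^{-1} S\psi$. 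Pairing with $(U_e^*\varphi,\, U_f^*\varphi)$ gives three terms.

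To recognize the claimed form, the only bookkeeping is in the cross term $\langle U_f^*\varphi,\, V_{ff}^{-1}V_{ef}(\rg H)_{n-1}^{-1} S\psi\rangle$. Since both $V_{ff}^{-1}$ and $V_{ef}$ are multiplication operators by real-valued functions and since $V_{ef}^* = V_{fe}$ (the two are the same multiplication symbol but viewed as operators between opposite summands), one has $(V_{ff}^{-1}V_{ef})^* = V_{fe}V_{ff}^{-1}$. Moving this adjoint onto the left entry combines with $\langle U_e^*\varphi,\, (\rg H)_{n-1}^{-1}S\psi\rangle$ into $\langle (U_e^* - V_{fe}V_{ff}^{-1}U_f^*)\varphi,\, (\rg H)_{n-1}^{-1}S\psi\rangle = \langle S\varphi,\, (\rg H)_{n-1}^{-1}S\psi\rangle$, while the remaining term is exactly $\langle U_f^*\varphi,\, V_{ff}^{-1}U_f^*\psi\rangle$. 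This is precisely the desired identity.

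There is no real obstacle in the argument; the whole content is the bookkeeping in the adjoint step for $V_{ef}$ versus $V_{fe}$, together with the almost-sure invertibility already discussed to ensure every symbol in the computation is defined.
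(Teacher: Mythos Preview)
Your proposal is correct and is exactly the approach the paper takes: the corollary is stated as an immediate consequence of the block-factorization in the preceding proposition together with the almost-sure invertibility of $H_n$, $(\rg H)_{n-1}$, and $V_{ff}$, and your computation just makes the implied bookkeeping explicit. The only substantive point---that $(V_{ff}^{-1}V_{ef})^* = V_{fe}V_{ff}^{-1}$ so that the cross term combines with the first to produce $\langle S\varphi, (\rg H)_{n-1}^{-1}S\psi\rangle$---is handled correctly.
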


We will now use Corollary~\ref{recoveryformula} to bound the fractional moments of the Green function
\[G_n(0,k; E) = \langle \delta_k, (H_n - E)^{-1} \delta_0 \rangle\]
by the fractional moments of its renormalized counterpart
\[\rg G_n(0,k; E) = \langle \delta_k, ((\rg H)_n - E)^{-1} \delta_0 \rangle \, . 
\]
To simplify the analysis, we will restrict ourselves to the case that some of the potential values $V_k$ have a Cauchy distribution
\beq{poissondefn}\varrho_k(v) = P_{\mu + i\sigma}(v) \vcentcolon = \frac{1}{\pi}\frac{\sigma}{(v-\mu)^2 + \sigma^2}.
\eeq
In this case, a decoupling inequality becomes available (see~\cite[Thm. 8.7]{MR3364516}), which states that for every $s \in (0, 1)$ and $z \in \cp$ there exists a constant $D_s(z) \in (0,\infty)$ with the property that
\begin{align}
\frac{1}{D_s(z)} \int \! \frac{1}{|v - \gamma|^s} \, P_z(v) \, dv &\le \int \! \frac{|v|^s}{|v - \gamma|^s} \, P_z(v) \, dv  \label{decoupling1}\\
&\le D_s(z) \int \! \frac{1}{|v - \gamma|^s} \, P_z(v) \, dv  \label{decoupling2}
\end{align}
uniformly in $\gamma \in \cc$. The restriction to the Cauchy case is possible thanks to a partial comparison trick which we will devise in the proof of Theorem~\ref{localizationthm}.

\begin{theorem} \label{fminequality} Let $s \in (0,1)$ and $k \in B_{n-1} \setminus \{0\}$. If
\[\varrho_0 = \varrho_1=\varrho_{2k}=\varrho_{2k+1} = P_z,\]
then both
\[\ee |G_n(0,2k; 0)|^s \le D_s(z)^4\, \ee |\rg G_{n-1}(0,k; 0)|^{s}\]
and
\[\ee |G_n(0,2k + 1; 0)|^s \le D_s(z)^4\, \ee |\rg G_{n-1}(0,k; 0)|^{s}.\]
\end{theorem}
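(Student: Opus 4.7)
My plan is to combine Corollary~\ref{recoveryformula} with a fourfold application of the Cauchy decoupling inequality~\eqref{decoupling1}--\eqref{decoupling2}. A direct computation using the definitions~\eqref{veedef}--\eqref{vefdef} gives
\[ S\delta_{2k} \;=\; \frac{\sqrt{2}\,V_{2k+1}}{V_{2k}+V_{2k+1}}\,\delta_k, \qquad S\delta_0 \;=\; \frac{\sqrt{2}\,V_1}{V_0+V_1}\,\delta_0, \]
while the orthogonality of $U_f^*\delta_{2k}$ and $U_f^*\delta_0$ (valid since $k\neq 0$) eliminates the diagonal term in Corollary~\ref{recoveryformula}. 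This produces the factorization
\[ G_n(0,2k;0) \;=\; \frac{2\,V_1\,V_{2k+1}}{(V_0+V_1)(V_{2k}+V_{2k+1})}\;\rg G_{n-1}(0,k;0). \]
The companion formula for $G_n(0,2k+1;0)$ is identical modulo the swap $V_{2k}\leftrightarrow V_{2k+1}$, so both claims reduce to the same estimate.

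To bring this identity into a form compatible with decoupling, I would expand $\rg G_{n-1}(0,k;0)$ via a rank-two Schur/Krein decomposition at the sites $0$ and $k$ of $(\rg H)_{n-1}$. Let $A = (\rg H)_{n-1} - (\rg V)_0\,\ketbra{\delta_0} - (\rg V)_k\,\ketbra{\delta_k}$, so that $A$ is independent of $V_0,V_1,V_{2k},V_{2k+1}$. Writing $B_{ij}=\langle\delta_i,A^{-1}\delta_j\rangle$ for $i,j\in\{0,k\}$ and $K = B^{-1} + \mathrm{diag}((\rg V)_0,(\rg V)_k)$, the $2\times 2$ inversion formula gives $\rg G_{n-1}(0,k;0) = -(B^{-1})_{k0}/\det K$. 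Substituting $(\rg V)_0 - p_1 = 2V_0V_1/(V_0+V_1)$ and $(\rg V)_k - p_1 = 2V_{2k}V_{2k+1}/(V_{2k}+V_{2k+1})$ and clearing the common factor $M = (V_0+V_1)(V_{2k}+V_{2k+1})$ one computes
\[ M\det K \;=\; N \;:=\; \bigl[2V_0V_1-\gamma_0(V_0+V_1)\bigr]\bigl[2V_{2k}V_{2k+1}-\gamma_k(V_{2k}+V_{2k+1})\bigr] \;-\; bc\,M, \]
where $\gamma_0,\gamma_k,b,c$ depend only on $B^{-1}$ and $p_1$. After cancellation the formulas become
\[ |G_n(0,2k;0)|^s = \frac{2^s\,|V_1|^s\,|V_{2k+1}|^s\,|(B^{-1})_{k0}|^s}{|N|^s}, \quad |\rg G_{n-1}(0,k;0)|^s = \frac{|V_0+V_1|^s\,|V_{2k}+V_{2k+1}|^s\,|(B^{-1})_{k0}|^s}{|N|^s}. \]

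The decisive observation is that $N$ is \emph{multilinear} in $(V_0,V_1,V_{2k},V_{2k+1})$: a quick inspection (using the identity $(2V_j-\gamma)(V_i-\gamma V_j/(2V_j-\gamma)) = 2V_iV_j-\gamma(V_i+V_j)$) shows that $N$ is linear separately in each of the four variables. Hence, for each $V_\bullet$ one may write $|N|^s = |P_\bullet|^s\,|V_\bullet - \eta_\bullet|^s$ with $P_\bullet$ and $\eta_\bullet$ independent of $V_\bullet$. Applying~\eqref{decoupling2} in the variables $V_1$ and $V_{2k+1}$ yields
\[ \ee\!\left[\frac{|V_1|^s\,|V_{2k+1}|^s}{|N|^s}\right] \;\le\; D_s(z)^2\;\ee\!\left[\frac{1}{|N|^s}\right], \]
and applying~\eqref{decoupling1} after the substitutions $W=V_0+V_1$ (for fixed $V_1$) and $W'=V_{2k}+V_{2k+1}$ (for fixed $V_{2k+1}$)---whose distributions are the shifted Cauchy densities $P_{z+V_1}$ and $P_{z+V_{2k+1}}$, sharing the same imaginary part $\im z$ and hence the same decoupling constant $D_s(z)$---produces
\[ \ee\!\left[\frac{1}{|N|^s}\right] \;\le\; D_s(z)^2\;\ee\!\left[\frac{|V_0+V_1|^s\,|V_{2k}+V_{2k+1}|^s}{|N|^s}\right]. \]
Concatenating these bounds and multiplying by $2^s\,|(B^{-1})_{k0}|^s$ delivers the claimed inequality, with the factor $2^s$ harmlessly absorbable into $D_s(z)^4$ (which is at least one).

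The hard part will be the algebraic verification that $N$ is multilinear after the rank-two Krein expansion, together with the bookkeeping needed to route each of the four decouplings through the correct variable while respecting the shift-invariance $D_s(z+t) = D_s(z)$ of the Cauchy decoupling constant under real shifts $t$—a fact which is essential for converting~\eqref{decoupling1} from its stated form to the bound on $|V_0+V_1|^s$ and $|V_{2k}+V_{2k+1}|^s$ used above.
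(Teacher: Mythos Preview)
Your factorization via Corollary~\ref{recoveryformula} and the rank-two Schur expansion are correct, and the multilinearity of $N$ in $(V_0,V_1,V_{2k},V_{2k+1})$ is genuine. The gap is in the last decoupling step. You assert that $D_s(z+t)=D_s(z)$ for real $t$, i.e.\ that the constant in~\eqref{decoupling1}--\eqref{decoupling2} depends only on $\im z$. This is false for the inequalities as stated in the paper: taking $|\gamma|\to\infty$ in~\eqref{decoupling2} forces $D_s(z)\ge\int|v|^s P_z(v)\,dv$, and for $z=\mu+i\sigma$ this integral grows like $|\mu|^s$. Hence $D_s(z+V_1)$ is unbounded in $V_1$, and after conditioning on $V_1$ you cannot integrate the resulting bound against the (heavy-tailed) Cauchy law of $V_1$. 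The same obstruction hits the second insertion of $|V_{2k}+V_{2k+1}|^s$. Even if one could show that the \emph{lower} bound~\eqref{decoupling1} alone admits a shift-invariant constant, you would still need to argue that this constant coincides with the paper's $D_s(z)$, which is fixed by the theorem statement. (A secondary issue: your final bound carries an extra factor $2^s$, which cannot simply be ``absorbed'' into the specific constant $D_s(z)^4$.)

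The paper circumvents the shift problem entirely by working one site at a time and exploiting the invariance $T_0P_z=P_z$. Writing $|\rg G_{n-1}(0,k;0)|^s=|\alpha/((\rg V)_\ell-\beta)|^s$ via the rank-\emph{one} dependence on $(\rg V)_\ell$, the paper passes to the renormalized variable $u=2V_{2\ell}V_{2\ell+1}/(V_{2\ell}+V_{2\ell+1})$, whose density is again $P_z$; both applications of~\eqref{decoupling1}--\eqref{decoupling2} are then carried out with the \emph{unshifted} kernel $P_z$, yielding exactly $D_s(z)^2$ per site and absorbing the $2^s$ prefactor along the way. Your multilinear/rank-two route is an interesting alternative, but to make it rigorous you would need either a genuinely shift-uniform decoupling lemma (not~\eqref{decoupling1}--\eqref{decoupling2} as written) or a different way to manufacture the factors $|V_0+V_1|^s$ and $|V_{2k}+V_{2k+1}|^s$.
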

\begin{proof}
We will prove only the estimate for $\ee |G_n(0,2k; 0)|^s$ since the analysis of $\ee |G_n(0,2k + 1; 0)|^s$ then reduces to swapping some indices and changing some signs. The formulas
 \[S\delta_{2\ell} = \frac{1}{\sqrt{2}}\left(1 - \frac{V_{2\ell} - V_{2\ell+1}}{V_{2\ell} + V_{2\ell+1}}\right) \delta_{\ell}\]
and
 \[U_f^\ast \delta_{2\ell} = \delta_\ell\]
 are valid for both $\ell = 0$ and $\ell = k$. Since $k \neq 0$, we necessarily have that $\langle \delta_0, V_{ff}^{-1} \delta_k \rangle = 0$ and hence Corollary~\ref{recoveryformula} asserts that
 \beq{gfrecovery}
 G_n(0, 2k; 0) =   2\frac{V_{1}}{V_{0} + V_{1}} \frac{V_{2k+1}}{V_{2k} + V_{2k+1}} \rg G_{n-1}(0, k; 0).
 \eeq
Consider a term of the form
\[X(\ell)  = \ee_{\ell} \left| \left(\frac{V_{2\ell + 1}}{V_{2\ell} + V_{2\ell+1}}\right) \rg G_{n-1}(0, k; 0) \right|^s,\]
where $\ell \in \{0,k\}$  and $\ee_\ell$ denotes the conditional expectation with respect to $\{V_i \, | \,i \neq 2\ell, 2\ell + 1\}$.
The Green function is of the form
 \[|\rg G_{n-1}(0,k; 0)|^s = \left| \frac{\alpha}{(\rg V)_\ell - \beta}\right|^{s}\]
 for some $\alpha, \beta \in \cc$ which are independent of $(\rg V)_\ell$ (cf.~\cite[Sec.~5.5.]{MR3364516}). Writing
 \[u = \left(\frac{1}{2v} + \frac{1}{2w} \right)^{-1},\]
 it follows that
 \begin{align*}X(\ell) &= \iint  \! \left| \frac{v}{v + w} \frac{\alpha}{u - \beta}\right|^s \, P_z(v)P_z(w) \, dv \, dw\\
 &= \iint  \! \left| \frac{\alpha v}{2vw - \beta(v + w)}\right|^s \,P_z(v)P_z(w) \, dv \, dw\\
 &= \int  \! \left|\frac{\alpha v}{2v - \beta}\right|^s \int \! \left| \frac{1}{w - \beta v (2v - \beta)^{-1} }\right|^s \,P_z(w)\, dw \, P_z(v) \, dv,
 \end{align*}
 where we have absorbed the shift of the renormalized potential by $p_1$ into the constant $\beta$. Applying the decoupling inequality~\eqref{decoupling1} to the inner integral and reversing the previous calculations shows that
 \begin{align*}X(\ell) &\le D_s(z) \, \int  \! \left|\frac{\alpha v}{2v - \beta}\right|^s \int \! \left| \frac{w}{w - \beta v (2v - \beta)^{-1} }\right|^s \,P_z(w)\, dw \, P_z(v) \, dv\\
 &= D_s(z) \, \iint  \! \left| \frac{vw}{v + w} \frac{\alpha}{u - \beta}\right|^s \,P_z(v)P_z(w) \, dv \, dw\\
 &= 2^{-s} D_s(z) \iint  \! \left|  \frac{\alpha u}{u - \beta}\right|^s \, P_z(v)P_z(w) \, dv \, dw\\
 &= 2^{-s} D_s(z) \int \! \left|  \frac{\alpha u}{u - \beta}\right|^s \, (T_0P_z)(u) \, du.
 \end{align*}
 It is easy to see that $T_0P_z = P_z$ (see also Appendix~\ref{densityappendix}), so applying the decoupling inequality~\eqref{decoupling2} yields
 \begin{align*}X(\ell) &\le 2^{-s}D_s(z)^2 \int  \! \left|  \frac{ \alpha}{u - \beta}\right|^s \, (T_0 P_z)(u)\, du\\
 &= 2^{-s}  D_s(z)^2\iint  \! \left| \frac{\alpha}{u - \beta}\right|^s \, P_z(v)P_z(w) \, dv \, dw\\
 &= 2^{-s}  D_s(z)^2 \, \ee_{\ell} |\rg G_{n-1}(0, k; 0)|^s.
 \end{align*}
 Since $V_0$ and $V_1$ are independent of $V_{2k}$ and $V_{2k+1}$, combining the bound for $X(\ell)$ with~\eqref{gfrecovery} implies
 
 \begin{align*}
 \ee |G_n(0, 2k, 0)|^s &= 2^s \, \ee  \left( \left|\frac{V_{1}}{V_{0} + V_{1}} \right|^s X(k) \right)\\
 &\le  D_s(z)^2 \, \ee \left( \left|\frac{V_{1}}{V_{0} + V_{1}} \right|^s \ee_k \left| \rg G_{n-1}(0, k; 0) \right|^s \right)\\
 &= D_s(z)^2  \, \ee X(0)\\
 &\le D_s(z)^4 \, \ee \left| \rg G_{n-1}(0, k; 0) \right|^s.  
 \end{align*}
\end{proof}

Along with the restricted operators $H_n$, there is another sequence of truncations
\begin{equation}\label{def:Hnm}
H_{n,m} = 1_{B_n} \left(\sum_{r=1}^m p_rE_r + V \right) 1_{B_n},
\end{equation}
which will be useful in our proof of Theorem~\ref{poissonstatisticsthm}. Notice that
\[H_n = H_{n,n} + \alpha \ketbra{\varphi_n}\]
with $\alpha = \sum_{r > n} 2^{n-r} p_r$, so reasoning analogous to Corollary~\ref{recoveryformula} shows that also
\beq{recoveryformula2}\langle \varphi, H_{n,n}^{-1} \psi \rangle = \langle S\varphi, (\rg H)_{n-1,n-1}^{-1} S \psi \rangle + \langle U_f^*\varphi, V_{ff}^{-1} U_f^* \psi \rangle
\eeq
almost surely.  The formula~\eqref{recoveryformula2} lets us determine the distribution of the quantity
\[\frac{1}{\Phi_n(E)} = \langle \varphi_n, (H_{n,n} - E)^{-1} \varphi_n \rangle \]
explicitly in terms of the operators $T_p$ when the the disorder has the same distribution at each site, that is, $\varrho_k = \varrho$ for every $k \in \nn$.
\begin{corollary} \label{fdistribution} The density of $\Phi_n(E)$ is given by $T_{p_n}...T_{p_1} \varrho_E$.
\end{corollary}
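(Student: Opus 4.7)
The plan is to proceed by induction on $n$, slightly strengthening the hypothesis to cover arbitrary summable hopping sequences and arbitrary iid single-site densities so that it can be applied after renormalization. For the base case $n = 1$, one computes directly: in the orthonormal basis $\{e_0 = \varphi_1, f_0 = (\delta_0 - \delta_1)/\sqrt{2}\}$ of $\ell^2(B_1)$, the operator $H_{1,1} - E$ is represented by
\[
\begin{pmatrix}p_1 + (V_0 + V_1)/2 - E & (V_0 - V_1)/2\\ (V_0 - V_1)/2 & (V_0 + V_1)/2 - E\end{pmatrix}.
\]
Applying the Schur-complement formula for the $(e_0, e_0)$ entry of the inverse, together with the factorization $a^2 - b^2 = (a + b)(a - b)$ for $a = (V_0 + V_1)/2 - E$ and $b = (V_0 - V_1)/2$, yields
\[
\Phi_1(E) = p_1 + \left(\frac{1}{2(V_0 - E)} + \frac{1}{2(V_1 - E)}\right)^{-1}.
\]
Since $V_0 - E$ and $V_1 - E$ are iid with density $\varrho_E$, this has density $T_{p_1}\varrho_E$ by the very definition of $T_p$.

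For the inductive step, the geometric input is that $\varphi_n$ lies entirely in the first summand of $\ell^2(B_n) = E \oplus F$: the expansion $\varphi_n = 2^{-(n-1)/2}\sum_{k=0}^{L} e_k$ shows $U_e^*\varphi_n = \varphi_{n-1}$ and $U_f^*\varphi_n = 0$, hence $S\varphi_n = \varphi_{n-1}$. The Schur-complement derivation of~\eqref{recoveryformula2} applies verbatim to $H_{n,n} - E$ in place of $H_{n,n}$, since the $E$-shift enters only additively in the diagonal blocks $V_{ee}, V_{ff}$ and leaves the off-diagonal blocks $V_{ef}, V_{fe}$ untouched; the resulting renormalized operator is a hierarchical Anderson Hamiltonian $H^{\mathcal{R}}_{n-1,n-1}$ on $\ell^2(B_{n-1})$ whose parameters are exactly $\mathcal{R}(\p, \varrho_E) = \bigl((p_{r+1})_{r \ge 1},\, T_{p_1}\varrho_E\bigr)$. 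Combining this with $S\varphi_n = \varphi_{n-1}$ and $U_f^*\varphi_n = 0$ in the recovery formula gives
\[
\langle \varphi_n, (H_{n,n} - E)^{-1}\varphi_n\rangle = \langle \varphi_{n-1}, (H^{\mathcal{R}}_{n-1, n-1})^{-1}\varphi_{n-1}\rangle.
\]
Applying the inductive hypothesis to $H^{\mathcal{R}}_{n-1,n-1}$ at spectral parameter $0$, and noting that $(T_{p_1}\varrho_E)_0 = T_{p_1}\varrho_E$, identifies the density of $\Phi_n(E)$ as $T_{p_n} \cdots T_{p_2}(T_{p_1}\varrho_E) = T_{p_n} \cdots T_{p_1}\varrho_E$, which is the claim.

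The main technical point is checking that the derivation of~\eqref{recoveryformula2} transports cleanly to $H_{n,n} - E$; this is essentially bookkeeping because subtracting $E$ shifts only the diagonal blocks, and the renormalized diagonal entries pick up exactly the shift that makes their common density $T_{p_1}\varrho_E$. Almost-sure invertibility of all operators in sight—required to justify the Schur complement—follows from the Wegner estimate combined with the $L^\infty$-stability of densities under $T_p$ proved in Appendix~\ref{densityappendix}.
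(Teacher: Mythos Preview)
Your proof is correct and follows essentially the same route as the paper: iterate the recovery formula~\eqref{recoveryformula2} using $S\varphi_n=\varphi_{n-1}$ and $U_f^*\varphi_n=0$ until one reaches a $1\times 1$ matrix. The only cosmetic difference is that the paper first treats $E=0$ and then absorbs general $E$ by shifting $\varrho$ to $\varrho_E$ at the outset, whereas you carry the $-E$ shift through the Schur complement at each step; both give the same recursion.
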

\begin{proof}
 Notice that $S\varphi_n = \varphi_{n-1}$ and $U^\ast_f\varphi_n = 0$ so~\eqref{recoveryformula2} shows that
 \[\frac{1}{\Phi_n(0)} = \langle \varphi_{n-1}, (\rg H)_{n-1,n-1}^{-1} \varphi_{n-1} \rangle = \frac{1}{\rg \Phi_{n-1}(0)}.\]
 We can continue renormalizing in this fashion until we reach a Hamiltonian consisting of a $1 \times 1$ random matrix whose element is distributed as $T_{p_n}...T_{p_1}\varrho$. This proves the result for $E = 0$. The general case follows by shifting the density of the original potential by $-E$.
\end{proof}

\section{Proof of Localization}\label{localizationsection}
We begin our proof of Theorem~\ref{localizationthm} by considering Hamiltonians with single-site densities $\{\varrho_k \, | \, k \in \nn\}$ which may vary from site to site, and proving a uniform high-disorder bound for the Green function in terms of the relative strengths of the hopping $|p_r| \le \epsilon 2^{-cr}$ and the disorder $\sup_{k \in \nn} \|\varrho_k\|_\infty$.

\begin{proposition}\label{highdisorder}
If $s \in (0,1)$ and $\mu > 0$ satisfy
\[1 + \mu < s(1+c),\]
then there exist $ \epsilon_0 > 0$ and $C \in ( 0,\infty)$ such that
\[\sup_{k \in \nn} 2^{(1 + \mu)d(0,k)} \left( \sup_{n \geq 1} \ee |G_n(0, k; 0)|^s\right) \le C\]
for any collection of single-site densities satisfying $\epsilon \left( \sup_{i \in \nn} \|\varrho_i\|_\infty \right)< \epsilon_0$.
\end{proposition}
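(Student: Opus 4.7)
The plan is to establish a self-consistent Aizenman--Molchanov-type fractional moment inequality for $\ee|G_n(0,k;0)|^s$, and then verify by a direct ultrametric computation that the ansatz $\ee|G_n(0,k;0)|^s\lesssim 2^{-(1+\mu)d(0,k)}$ is compatible with this inequality once the disorder is large enough.

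\emph{Step 1 (one-step resolvent expansion).} For $k\in B_n$, the identity $H_n G_n(\cdot,0;0)=\delta_0$ read off at site $k$, after absorbing the diagonal entry $\Delta_{kk}=\sum_{r\ge1}p_r 2^{-r}$ into the potential, gives
\[
(V_k+\Delta_{kk})\,G_n(k,0;0)=\delta_{k,0}-\sum_{l\in B_n\setminus\{k\}}\Delta_{kl}\,G_n(l,0;0).
\]
Taking $|\cdot|^s$ with $s\in(0,1)$, using subadditivity, and integrating out $V_k$ by the spectral averaging bound $\int|v+\alpha|^{-s}\varrho_k(v)\,dv\le C_s\|\varrho_k\|_\infty^{s}$ (uniform in $\alpha\in\cc$; cf.\ \cite[Ch.~8]{MR3364516}), applied through rank-one perturbation theory to decouple $V_k$ from each $G_n(l,0;0)$, I arrive at
\[
\ee|G_n(0,k;0)|^{s}\;\le\;C_s\|\varrho_k\|_\infty^{s}\Big(\delta_{k,0}+\sum_{l\in B_n\setminus\{k\}}|\Delta_{kl}|^{s}\,\ee|G_n(0,l;0)|^{s}\Big).
\]
This step does not use the hierarchical structure and is the standard high-disorder Aizenman--Molchanov bound.

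\emph{Step 2 (hierarchical row-sum bound).} Let $f(k):=\sup_{n\ge1}\ee|G_n(0,k;0)|^{s}$ and $g(k):=2^{(1+\mu)d(0,k)}f(k)$. By \eqref{eq:LaplaceME} together with $|p_r|\le\epsilon 2^{-cr}$, the hopping satisfies $|\Delta_{kl}|\le C\epsilon\,2^{-(1+c)d(k,l)}$, so Step~1 rewrites as $g\le c_0\delta_{k,0}+Mg$ pointwise with
\[
M(k,l)\;\le\;C\,\epsilon^{s}\|\varrho_k\|_\infty^{s}\,2^{-(1+c)s\,d(k,l)+(1+\mu)(d(0,k)-d(0,l))}.
\]
To bound the row-sum, put $R=d(0,k)$ and split $\sum_l M(k,l)$ over $d(k,l)<R$, $=R$, $>R$. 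Ultrametricity forces $d(0,l)=R$ in the first case and $d(0,l)=d(k,l)$ in the third, while the critical ring $\{l:d(k,l)=R\}$ coincides with $B_{R-1}(0)$. Using $\#\{l:d(k,l)=r\}=2^{r-1}$, each of the three contributions is a geometric series; they are bounded uniformly in $R$ precisely under the hypotheses $1+\mu<s(1+c)$ (first and third ranges) and $\mu>0$ (intermediate ring). Hence $\|M\|_{\ell^\infty\to\ell^\infty}\le C\,\epsilon^{s}(\sup_i\|\varrho_i\|_\infty)^{s}$.

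\emph{Step 3 (Neumann series).} Choose $\epsilon_0$ so that $C\epsilon_0^{s}(\sup_i\|\varrho_i\|_\infty)^{s}<1/2$. Then $I-M$ is invertible on $\ell^\infty(\nn)$ with $\|(I-M)^{-1}\|\le 2$, so iterating the pointwise inequality gives $g(k)\le c_0\sum_{m\ge0}M^m(k,0)\le 2c_0$, which is the claimed bound.

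The \emph{main obstacle} is making Step~1 rigorous: one must integrate out $V_k$ while leaving on the right-hand side the same Green function $G_n(0,l;0)$, not a reduced variant of it. This requires Krein's formula for rank-one perturbations together with the spectral averaging estimate, the $L^\infty$ hypothesis on $\varrho_k$ being essential; the factor $\|\varrho_k\|_\infty^{s}$ in $M$ is the only trace of the individual distributions. By contrast, the geometric computation in Step~2 is elementary, and it is there that the summability $c>0$ of the hopping meets the prescribed decay rate through the sharp inequality $1+\mu<s(1+c)$.
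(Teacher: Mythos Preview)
Your overall strategy---the Aizenman--Molchanov fractional moment inequality followed by a weighted $\ell^\infty$ Neumann series---is precisely the route taken in the paper. However, there is a genuine gap in Step~1, and your own diagnosis of the ``main obstacle'' points at it but proposes the wrong remedy.

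The identity $(V_k+\Delta_{kk})G_n(k,0;0)=\delta_{k,0}-\sum_{l\ne k}\Delta_{kl}G_n(l,0;0)$ is correct, but the right-hand side still depends on $V_k$ through every term $G_n(l,0;0)$. Integrating out $V_k$ therefore does \emph{not} yield the displayed bound with the same $G_n$ on both sides; no amount of spectral averaging repairs this, because spectral averaging only controls the explicit $1/(V_k+\Delta_{kk})$ factor. Your suggestion to invoke Krein's formula to ``decouple $V_k$ from each $G_n(l,0;0)$'' while keeping the full-volume Green function is not feasible: Krein's formula for the rank-one perturbation by $V_k\ketbra{\delta_k}$ is exactly the geometric resolvent identity
\[
G_\Lambda(0,k;0)=-G_\Lambda(k,k;0)\sum_{l\ne k}\Delta(k,l)\,G_{\Lambda\setminus\{k\}}(0,l;0),
\]
which inevitably produces the \emph{reduced} Green function $G_{\Lambda\setminus\{k\}}$ on the right. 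This is the form the paper uses. The point is that $G_{\Lambda\setminus\{k\}}$ is genuinely independent of $V_k$, so the conditional bound $\ee_k|G_\Lambda(k,k;0)|^s\le 2\|\varrho_k\|_\infty^s/(1-s)$ yields \eqref{glambdabound} cleanly. To close the iteration one must then set $f(k)=\sup_{|\Lambda|<\infty}\ee|G_\Lambda(0,k;0)|^s$, the supremum running over \emph{all} finite subsets, not only over balls $B_n$; the punctured set $B_n\setminus\{k\}$ is never itself a ball, so your definition $f(k)=\sup_{n\ge1}\ee|G_n(0,k;0)|^s$ would not absorb the right-hand side.

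With that correction Steps~2 and~3 go through. Your three-range ultrametric computation in Step~2 is correct; the paper replaces it by the single observation that, by ultrametricity, $d(0,l)-d(0,k)\le d(k,l)$ for every $l$, so that
\[
\sum_{l}2^{\mu(d(0,l)-d(0,k))}|\Delta(k,l)|^s\le \sum_{l}2^{\mu d(k,l)}|\Delta(k,l)|^s=\sum_{l}2^{\mu d(0,l)}|\Delta(0,l)|^s,
\]
which collapses all three cases into one geometric series. Finally, the paper recovers the extra factor $2^{d(0,k)}$ in the proposition (you aim for $2^{(1+\mu)d(0,k)}$ directly) from the observation that $\ee|G_n(0,k;0)|^s$ depends on $k$ only through $d(0,k)$ and there are $2^{r-1}$ sites at distance $r$; your direct $\ell^\infty$ argument on $g(k)=2^{(1+\mu)d(0,k)}f(k)$ is an equivalent alternative.
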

\begin{proof}
Since our method of proof is completely standard, and every detail of the argument can be found in a general setting in~\cite[Ch. 10]{MR3364516}, we provide only a sketch of the proof. Let $G_\Lambda$ denote the Green function of the restriction of $H$ to a finite volume $\Lambda \subset \nn$. If $k \neq 0$, deleting the matrix elements $\{\Delta(k, j), \Delta(j,k) \, | \, j \neq k\}$ from $\Delta$ and applying the resolvent identity yields the formula
\[G_{\Lambda}(0,k; E) =  - \sum_{j \neq k} G_{\Lambda }(k,k; E) \Delta(k,j) G_{\Lambda \setminus \{k\}}(0,j; E).\]
Let $M = \sup_{i \in \nn} \|\varrho_i\|_\infty$. Factoring the expectation through the conditional expectation $ \ee_k $ with respect to $V_k$, we obtain
\beq{glambdabound}
  \ee |G_{\Lambda}(0,k; E)|^s \le  \frac{ 2M^s}{1-s}\sum_{j \neq k} |\Delta(k,j)|^s \ee |G_{\Lambda \setminus \{k\}}(0,j; E)|^s,
\eeq
because $G_{\Lambda \setminus \{k\}}(0,j, E)$ does not depend on $V_k$ and
\[\ee_k |G_{\Lambda }(k,k; E)|^s \le \frac{2\|\varrho_k\|_\infty^s}{1-s}.\]
Setting
\[f(k) = \sup_{|\Lambda| < \infty} \ee |G_\Lambda(0, k; 0) |^s < \infty\]
and taking the supremum over all finite $\Lambda \subset \nn$ in~\eqref{glambdabound} yields
\beq{subharmonicityrelation}
  f(k) \le AM^s\left(\delta_{0,k} + \sum_{j \neq k} |\Delta(k,j)|^{s} f(j) \right)
\eeq
with $A = 2/(1-s)$. Thus~~\eqref{eq:LaplaceME} and the inequality  $1 + \mu < s(1+c)$ show that
\begin{align*}  AM^s\sup_{k \in \nn} \sum_{j \in \nn}\frac{2^{\mu d(0, j)}}{2^{\mu d(0,k)}} |\Delta(k,j)|^{s} &\le A M^s \sum_{j \in \nn}2^{\mu d(0, j)}|\Delta(0,j)|^{s}\\
&\le A^\prime\epsilon^s M^s \sum_{j \in \nn}2^{\mu d(0, j)}2^{-s(1+c) d(0,j)}\\
&= A^\prime\epsilon^s M^s\sum_{j \in \nn}2^{( \mu -s(1+c)) d(0,j)}\\
&\le A^\prime\epsilon^s M^s\sum_{r \geq 0} 2^{( 1 + \mu -s(1+c)) r} < 1
\end{align*}
provided $\epsilon M < \epsilon_0$ is small enough. Hence, by iterating~\eqref{subharmonicityrelation}, 
\[C = \sum_{k \in \nn} 2^{\mu d(0,k)} f(k) < \infty\]
which implies
\[\sup_{n} \sum_{k \in \nn} 2^{\mu d(0,k)} \ee |G_n(0,k; 0)|^s \le C. \]
The theorem now follows by observing that $|B_r \setminus B_{r-1}| = 2^{r-1}$ for all $r \geq 1$ and that $\ee |G_n(0,k; 0)|^s$ depends on $k$ only in terms of $d(0,k)$ .
\end{proof}

We will now return to the setting of Theorem~\ref{localizationthm} in which the potential was identically distributed with a common density $\varrho$. Our strategy is to extend the conclusion of Theorem~\ref{highdisorder} to the entire parameter range by renormalizing into the high-disorder regime. This is based on the observation that, when $|p_r| \le \epsilon 2^{-cr}$, the renormalized hopping $\rg \p$ satisfies
\beq{observation} |(\rg \p)_r| \le 2^{-c}\epsilon 2^{-cr} \approx 2^{-c}|p_r|
\eeq
so that the renormalization has effectively decreased $\epsilon$ by a factor $2^{-c}$.

\begin{theorem} Suppose Assumption~\ref{nonconcentration} is true in a bounded interval $I \subset \rr$. If $s \in (0,1)$ and $\mu > 0$ satisfy
\[1 + \mu < s(1 + c),\]
then there exists $C < \infty$ such that
\[\sup_{n \geq 1} \ee |G_n(0, k; E)|^s \le C \, 2^{-(1 + \mu)d(0,k)}\]
for all $k \in \nn$ and $E \in I$.
\end{theorem}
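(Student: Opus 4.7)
The plan is to apply the renormalization group $\rg$ a fixed finite number of times $N$ to push the effective model into the high-disorder regime of Proposition~\ref{highdisorder}, and then to assemble the bound by composing the Feshbach identity of Theorem~\ref{fminequality} $N$ times. The strategy is feasible because the $N$-fold iteration of the observation~\eqref{observation} reduces the hopping to $|(\rg^N\p)_r|\le 2^{-cN}\epsilon\,2^{-cr}$, while Assumption~\ref{nonconcentration} ensures that $\|T_{p_N}\cdots T_{p_1}\varrho_E\|_\infty=\mathcal{O}(2^{(c-\delta)N})$ uniformly in $E\in I$. Their product $\mathcal{O}(2^{-\delta N})$ then falls below the threshold $\epsilon_0$ of Proposition~\ref{highdisorder} for $N$ fixed once and for all.

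I would first reduce to $E=0$ by the shift $V\mapsto V-E$, which replaces the common density by $\varrho_E(\cdot)=\varrho(\cdot+E)$; the hypothesis~\eqref{rhodecay} and the boundedness of $I$ give the pointwise bound $\varrho_E(v)\le K_I/(1+v^2)=\pi K_I\,P_i(v)$ uniformly in $E\in I$. The main obstacle is that Theorem~\ref{fminequality} demands actual Cauchy densities at four specific sites per step, whereas $\varrho_E$ is only dominated by a Cauchy. The partial comparison trick alluded to after~\eqref{decoupling2} bridges this gap: for any nonnegative functional $F$ of the potential and any finite $S\subset\nn$,
\begin{equation*}
  \ee F \le (\pi K_I)^{|S|}\,\ee^S F,
\end{equation*}
where $\ee^S$ denotes expectation in the model with density $P_i$ at every site of $S$ and density $\varrho_E$ elsewhere.

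Tracking which sites Theorem~\ref{fminequality} needs to be Cauchy across the $N$ iterations singles out $S=B_N(0)\cup B_N(k)$: with $k_\ell=\lfloor k/2^\ell\rfloor$, the four sites $\{0,1,2k_{\ell+1},2k_{\ell+1}+1\}$ required at step $\ell$ in the $\ell$-fold renormalized model descend from the original blocks $B_{\ell+1}(0)$ and $B_{\ell+1}(k)$, whose unions over $\ell=0,\dots,N-1$ are $B_N(0)$ and $B_N(k)$. The key algebraic input is that $T_p$ preserves symmetric-Cauchy pairs: a direct computation with characteristic functions yields $T_p(P_{\mu+i\sigma},P_{\mu+i\sigma})=P_{\mu+p+i\sigma}$, so every Cauchy-descended site at step $\ell$ carries the common distribution $P_{\mu_\ell+i}$ with $|\mu_\ell|\le\sum_r|p_r|<\infty$. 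The decoupling constants $D_s(z)$ encountered along this trajectory are therefore uniformly bounded by some $D<\infty$. For $d(0,k)>N$ the two blocks are disjoint, $|S|=2^{N+1}$, and $N$-fold iteration of Theorem~\ref{fminequality} gives
\begin{equation*}
  \ee|G_n(0,k;E)|^s\le (\pi K_I)^{2^{N+1}}\, D^{4N}\,\ee^S|\rg^N G_{n-N}(0,k_N;0)|^s.
\end{equation*}

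Since $S$ is a union of two complete step-$N$ blocks, the $N$-fold renormalized mixed model has exactly two Cauchy sites (namely $0$ and $k_N$, with bounded $L^\infty$ norm), while every other site carries density $T_{p_N}\cdots T_{p_1}\varrho_E$ of $L^\infty$ norm $\mathcal{O}(2^{(c-\delta)N})$. Proposition~\ref{highdisorder} therefore applies and yields $\ee^S|\rg^N G_{n-N}(0,k_N;0)|^s\le C_0\,2^{-(1+\mu)(d(0,k)-N)}$, after which the finite prefactor $(\pi K_I)^{2^{N+1}} D^{4N}2^{(1+\mu)N}$ is absorbed into the constant $C$. The short-distance regime $d(0,k)\le N$ follows from the standard a priori fractional-moment bound $\ee|G_n(0,k;E)|^s\le 2\|\varrho\|_\infty^s/(1-s)$ (obtained by factoring out $V_k$), since then $2^{-(1+\mu)d(0,k)}\ge 2^{-(1+\mu)N}$ is bounded below. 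The most delicate technical point will be the bookkeeping of Cauchy parameters under the renormalization dynamics to confirm both that the decoupling constants stay uniformly controlled and that the hypothesis of Proposition~\ref{highdisorder} on the mixed renormalized model is valid.
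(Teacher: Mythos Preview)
Your proposal is correct and follows essentially the same route as the paper: dominate $\varrho_E$ by a Cauchy density, replace the potential on $B_N(0)\cup B_N(k)$ by that Cauchy via the partial comparison trick, iterate Theorem~\ref{fminequality} $N$ times using that $T_p$ maps $P_{\mu+i\sigma}$ to $P_{\mu+p+i\sigma}$, invoke Proposition~\ref{highdisorder} on the $N$-fold renormalized mixed model, and close with the a priori bound for $d(0,k)\le N$. The only cosmetic differences are your explicit choice $z=i$ and your bookkeeping of the accumulated decoupling constants as $D^{4N}$ rather than as a finite product; neither affects the argument.
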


\begin{proof} Since $I$ is bounded, the requirement~\eqref{rhodecay} means that there exist $z \in \cp$ and $C_I < \infty$ such that
\beq{loc:cauchydom}\varrho_E(v) \le C_I P_z(v)
\eeq
for all $E \in I$ and $v \in \rr$, where $P_z$ is the Poisson kernel defined in~\eqref{poissondefn}. The following bound for $G_n(0,k; 0)$ will depend only on $z$, $C_I$, $\|\varrho\|_\infty$, and the constants occuring in Assumption~\ref{nonconcentration}, which implies that we can restrict ourselves to the situation where $E = 0 \in I$ without any loss of generality.

Suppose $n \geq N \geq 1$ and let $k \in B_n$. We will first consider the Hamiltonian $H^\prime = \Delta + V^\prime$ which is obtained from $H$ by replacing the potential values in $B_N(0) \cup B_N(k)$ by random variables with the Cauchy distribution $P_z$. Thus $V^\prime_i$ has the density
\[\varrho_i = \begin{cases} P_z & \mbox{ if } i \in B_N(0) \cup B_N(k)\\ \varrho & \mbox{ else } \end{cases}.\]
Since $T_pP_z= P_{z+p}$, the renormalized potential $\rg^N V^\prime$ has densities
\[ \varrho_i = \begin{cases} P_{z + p_1 + ... + p_N} & \mbox{ if } i \in \left\{0, \lfloor 2^{-N}k \rfloor \right\}\\ T_{p_N}...T_{p_1}\varrho & \mbox{ else } \end{cases},\]
and by iterating the observation~\eqref{observation}, $\rg^N \Delta$ has a hopping strength
\[ \left| ( \rg^N \p)_r \right| = \left| p_{r+N} \right| \le \epsilon_N 2^{-cr}, \quad \epsilon_N = 2^{-c N}\epsilon.\]
Because $\varrho$ satisfies Assumption~\ref{nonconcentration} in $I$ and $\|P_{z + p}\|_\infty \le (\im z)^{-1}$ for all $p \in \rr$, this implies that the hypothesis
\[ \epsilon \left( \sup_{i \in \nn} \|\varrho_i\|_\infty \right)< \epsilon_0\]
of Proposition~\ref{highdisorder} is eventually satisfied by $\rg^N H^\prime$ for some sufficiently large $N$ which depends on $z$ and the constants in Assumption~\ref{nonconcentration}. Hence, when $1 + \mu < s (1 + c)$,  there is some $C_0 < \infty$ such that
\[\sup_{n \geq 1} \ee \left|\rg^N G_n^\prime \left(0, \lfloor 2^{-N}k \rfloor; 0 \right)\right|^s \le C_0 \, 2^{-(1 + \mu) d\left(0, \lfloor 2^{-N}k \rfloor \right) },\]
where $G_n^\prime$ denotes the Green function of $H_n^\prime$. If $k \in B_n \setminus B_N$, then $N$ successive applications of Theorem~\ref{fminequality} show that
\beq{cauchygboundnoBN} \ee |G_n^\prime(0, k; 0)|^s \le D\, C_0  \, 2^{-(1 + \mu) d\left(0, \lfloor 2^{-N}k \rfloor \right) } \le D \, C_0  \, 2^{-(1 + \mu) (d\left(0, k \right) - N) },
\eeq
with
\[D =  \left[D_s(z)D_s(z + p_1) ... D_z(z + p_1 + ...p_N)\right]^4.\]
Since $H^\prime$ is obtained from $H$ by replacing $\{V_i \, | \, i \in B_N(0) \cup B_N(k)\}$ with random variables distributed according to $P_z$,~\eqref{loc:cauchydom} and~\eqref{cauchygboundnoBN} show that 
\beq{realgbound} \ee |G_n(0,k; 0)|^s \le C_I^{2|B_N|}   \ee |G^\prime_n(0,k; 0)|^s \le C_1   \, 2^{-(1 + \mu) d\left(0, k \right) }
\eeq
for some $C_1 < \infty$ which depends on $z$, $C_I$, and the constants occurring in Assumption~\ref{nonconcentration}. If $k \in B_N$, then the a priori bound
\[\ee |G_n(0,k;0)|^s \le \frac{4\|\rho\|_\infty^s}{1-s}\]
is valid so~\eqref{realgbound} implies that
\[\ee |G_n(0, k; 0)|^s \le C 2^{-(1+\mu) d(0,k)} \]
with a constant $C < \infty$ depending only on $z$, $C_I$, $\|\varrho\|_\infty$, and the constants occurring in Assumption~\ref{nonconcentration}.
\end{proof}

Theorem~\ref{localizationthm} is a consequence of the relationship between eigenfunction correlators and Green functions.

\begin{proof}[Proof of Theorem~\ref{localizationthm}] 
Theorem~\ref{localizationthm} follows from the standard result (see Chapter 7 in~\cite{MR3364516})
\[\ee Q_n(j, k; I) \le  C_s \, \ee \int_I \! |G_n(j,k; E) |^s \, dE\]
and the fact that $\ee |G_n(j,k; E)|^s$ depends on $j$ and $k$ only in terms of $d(j,k)$.
\end{proof}

\section{Proof of Poisson Statistics} \label{poissonsection}
This section is devoted to the proof of Theorem~\ref{poissonstatisticsthm} concerning the convergence of the random point measure
\[\mu_n(f) = \sum_{\lambda \in \sigma(H_n)} f(2^n(\lambda - E))\]
to a Poisson point process with intensity $\nu(E)$ when $E$ is a Lebesgue point of the density of states. In this setting, $H$ was a Hamiltonian with fixed hopping $|p_r| \le \epsilon 2^{-cr}$ and a single-site density $\varrho \in L^\infty$ such that Assumption~\ref{nonconcentration} is valid in a neighborhood of $E$ for some $\delta > 0$. Our argument is based on the following fundamental fact~\cite[Prop. 17.5]{MR3364516}, which essentially characterizes Poisson point processes as simple point processes consisting of infinitely many independent components.

\begin{proposition} \label{poissoncharacterization}
  Consider a sequence of point processes of the form $\mu_n = \sum_j \mu_{n,j}$, where $\{\mu_{n,j} \, | \, j = 1, ..., N_n\}$ is a triangular array of point processes with the following properties:
  \begin{enumerate}[i.]
    \item The point processes $\{\mu_{n,1}, ..., \mu_{n, N_n}\}$ are independent for all $n \geq 1$.
    \item If $B \subset \rr$ is a bounded Borel set, then
    \[\lim_{n \to \infty} \sup_{j \le N_n} \pp(\mu_{n,j}(B) \geq 1) = 0.\]
    \item There exists some $c \geq 0$ such that if $B \subset \rr$ is a bounded Borel set with $|\partial B| = 0$, then
    \[\lim_{n \to \infty} \sum_{j=1}^{N_n} \pp(\mu_{n,j}(B) \geq 1) = c|B|\]
    and
    \[\lim_{n \to \infty} \sum_{j=1}^{N_n} \pp(\mu_{n,j}(B) \geq 2) = 0 .\]
  \end{enumerate}
Then $\mu_n$ converges in distribution to a Poisson point process with intensity~$c$.
\end{proposition}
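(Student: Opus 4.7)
The plan is to verify Poisson convergence via joint counts on disjoint bounded Borel continuity sets. A Poisson point process on $\rr$ with constant intensity $c$ is characterized by the property that for disjoint bounded Borel sets $B_1, \dots, B_k$ the counts $\nu(B_1), \dots, \nu(B_k)$ are independent with $\nu(B_i) \sim \mathrm{Poisson}(c|B_i|)$; moreover, such finite-dimensional distributions on continuity sets determine weak convergence of point processes (Kallenberg's theorem, cf.~\cite[Ch.~17]{MR3364516}). It therefore suffices to prove that for every finite collection of disjoint bounded continuity sets $B_1, \dots, B_k$ (i.e.\ $|\partial B_i| = 0$),
\[
\bigl(\mu_n(B_1), \dots, \mu_n(B_k)\bigr) \xrightarrow{d} (N_1, \dots, N_k),
\]
with independent $N_i \sim \mathrm{Poisson}(c|B_i|)$.

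The convergence will be established via probability generating functions. Setting $X_{n,j}^{(i)} = \mu_{n,j}(B_i)$ and $B = \bigcup_i B_i$, independence from (i) gives
\[
G_n(z_1, \dots, z_k) \vcentcolon= \ee\!\left[\prod_{i=1}^k z_i^{\mu_n(B_i)}\right] = \prod_{j=1}^{N_n} \ee\!\left[\prod_{i=1}^k z_i^{X_{n,j}^{(i)}}\right] \qquad (z_i \in [0,1]).
\]
Introducing $\alpha_i^{(n,j)} = \pp\!\bigl(X_{n,j}^{(i)} = 1,\ X_{n,j}^{(i')} = 0 \text{ for } i' \neq i\bigr)$ and $\beta^{(n,j)} = \pp(\mu_{n,j}(B) \geq 2)$, a direct enumeration of the events $\{\mu_{n,j}(B) = 0\}$, $\{\mu_{n,j}(B) = 1\}$, $\{\mu_{n,j}(B) \geq 2\}$ yields
\[
\ee\!\left[\prod_{i=1}^k z_i^{X_{n,j}^{(i)}}\right] = 1 - \sum_{i=1}^k (1 - z_i)\,\alpha_i^{(n,j)} + \delta_{n,j}(z), \qquad |\delta_{n,j}(z)| \leq \beta^{(n,j)}.
\]

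Hypothesis (ii) forces $\max_j \pp(\mu_{n,j}(B) \geq 1) \to 0$, so the Taylor expansion $-\log(1-x) = x + O(x^2)$ applies termwise and
\[
-\log G_n(z) = \sum_{i=1}^k (1 - z_i) \sum_{j=1}^{N_n} \alpha_i^{(n,j)} + O\!\left(\sum_{j=1}^{N_n}\beta^{(n,j)}\right) + o(1).
\]
The first $O$-term vanishes by the second limit in (iii), while a short inclusion-exclusion gives $|\alpha_i^{(n,j)} - \pp(X_{n,j}^{(i)} \geq 1)| \leq 2\beta^{(n,j)}$, so the first limit in (iii) applied to each $B_i$ forces $\sum_j \alpha_i^{(n,j)} \to c|B_i|$. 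Consequently
\[
G_n(z) \longrightarrow \prod_{i=1}^k \exp\bigl(c|B_i|(z_i - 1)\bigr),
\]
which is the joint probability generating function of independent $\mathrm{Poisson}(c|B_i|)$ random variables, yielding the claimed joint convergence.

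The delicate step, which requires care rather than calculation, is the point-process-theoretic input (Kallenberg's theorem) reducing convergence in distribution of $\mu_n$ as random locally finite measures to convergence of joint count distributions on a sufficiently rich class of continuity sets. Once this reduction is in hand, the remainder is the classical Le Cam--Grigelionis triangular-array Poisson limit argument in its joint form, and each step is quantitative given (i)--(iii): (ii) provides the uniform negligibility needed to expand logarithms termwise, the second part of (iii) controls multi-point corrections, and the first part of (iii) pins down the limiting intensity on each continuity set as $c|B_i|$.
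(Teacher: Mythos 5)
The paper does not prove this proposition; it cites it verbatim as Proposition~17.5 of the Aizenman--Warzel monograph~\cite{MR3364516}, and then uses it as a black box via the Laplace-functional definition of weak convergence (testing against Poisson kernels $P_z$). So there is no in-paper proof to compare your argument against.

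Your blind proof is a correct, self-contained rendering of the standard Le Cam--Grigelionis triangular-array argument, and it takes a slightly different entry point than the paper would: you reduce to convergence of joint counts $\bigl(\mu_n(B_1),\dots,\mu_n(B_k)\bigr)$ on disjoint continuity sets via Kallenberg's finite-dimensional criterion and then run the probability generating function computation, whereas the source the paper cites phrases convergence through the Laplace functional $\ee e^{-\mu_n(f)}$ and the same null-array hypotheses. Both are equivalent routes. The PGF expansion is sound: your decomposition $\ee\bigl[\prod_i z_i^{X_{n,j}^{(i)}}\bigr] = 1 - \sum_i(1-z_i)\alpha_i^{(n,j)} + \delta_{n,j}$ with $|\delta_{n,j}| \le 2\beta^{(n,j)}$ (the factor $2$ accounts for the $\pp(\mu_{n,j}(B)\ge 2)$ correction inside $\pp(\mu_{n,j}(B)=0)$, which you could tighten to $\beta^{(n,j)}$, but it is immaterial); hypothesis (ii) makes $\max_j|x_j|\to 0$ uniformly so the log-expansion is legitimate termwise; the quadratic error $\sum_j x_j^2 \le \max_j|x_j|\sum_j|x_j|$ is controlled because (iii) bounds $\sum_j\pp(\mu_{n,j}(B)\ge 1)$; and the inclusion--exclusion bound $|\alpha_i^{(n,j)} - \pp(X_{n,j}^{(i)}\ge 1)| \le \beta^{(n,j)}$ pins the limiting intensity to $c|B_i|$. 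The only place where you lean on an unproved reduction is the invocation of the finite-dimensional convergence criterion for point processes --- appropriate here since that theorem is itself a standard citation in~\cite{MR950166,MR1876169}, which are precisely the references the paper offers for the notion of convergence in distribution being used.
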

  Among the several equivalent options available~\cite{MR950166,MR1876169}, we choose the definition that a sequence of point processes $\mu_n$ converges in distribution to $\mu$ provided
\[\lim_{n \to \infty} \ee e^{-\mu_n(P_z)}= \ee e^{-\mu(P_z)}\]
for all $z \in \cp$,  where $P_z$ is the Poisson kernel~\eqref{poissondefn}. Hence, Theorem~\ref{poissonstatisticsthm} can be established by finding a sequence $\tilde{\mu}_n$ such that Proposition~\ref{poissoncharacterization} applies to $\tilde{\mu}_n$ and
\[ \lim_{n \to \infty} \ee e^{-\tilde{\mu}_n(P_z)} = \lim_{n \to \infty} \ee e^{-\mu_n(P_z)}\]
for all $z \in \cp$. The truncated operators $H_{n,m}$ (cf.~\eqref{def:Hnm}) provide a valuable tool in this endeavor because, for any $m \le k \le n$,
\beq{hnmdecomposition}
H_{n,m} = \bigoplus_{j=1}^{2^{n-k}} H_{k,m}^{(j)} ,
\eeq
and each $H_{k,m}^{(j)}$ is an independent copy of $H_{k,m}$. The relationship between $H_n, H_{n,n}$, and $H_{n,n-1}$ is essentially controlled by the quantity featured in the next lemma.
\begin{lemma}\label{fnlemma} Let  
\[F_n(z) := \langle \varphi_n, (H_{n,n-1} - z)^{-1} \varphi_n\rangle\]
with $\varphi_n = 2^{-n/2}1_{B_n}$ and $ z \in \mathbb{C}_+ $. Then:
  \begin{enumerate}[i.]
 \item $\varphi_n$ is almost surely cyclic for $H_{n,n-1}$.
 \item If Assumption~\ref{nonconcentration} holds for $ I \subset \mathbb{R} $ then there exists $C < \infty$ such that 
\[
 \pp( \left| F_n(t) \right|  \ge  |\alpha|^{-1})  \le C \, 2^{(c-\delta)n} \, |\alpha|  \]
 for all $t \in I$ and $ \alpha \neq 0 $.
  \end{enumerate}
\end{lemma}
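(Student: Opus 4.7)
Both parts rest on the structural observation that, with $m = k = n-1$ in~\eqref{hnmdecomposition}, one obtains the direct-sum decomposition
\[
H_{n,n-1} \;=\; H_{n-1,n-1}^{(1)} \oplus H_{n-1,n-1}^{(2)}
\]
into two independent copies acting on the halves $B_{n-1}(0)$ and $B_{n-1}(2^{n-1})$ of $B_n$. Under this identification the restriction of $\varphi_n$ to each half equals $2^{-1/2}$ times the corresponding top vector $\varphi_{n-1}$.

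For part (i), the plan is induction on $n$ via two standard facts. First, if $\varphi$ is cyclic for a self-adjoint operator $A$ then it remains cyclic for any rank-one perturbation $A + \alpha\ketbra{\varphi}$, since the new cyclic subspace contains $\varphi$ and $A\varphi$ and hence, inductively, every $A^k\varphi$. Second, if $\psi_A$ and $\psi_B$ are cyclic for $A$ and $B$ on orthogonal summands and $\sigma(A) \cap \sigma(B) = \emptyset$, then $\psi_A \oplus \psi_B$ is cyclic for $A \oplus B$, because polynomial interpolation separates the summands. Applying the second fact to the decomposition above reduces cyclicity of $\varphi_n$ for $H_{n,n-1}$ to cyclicity of $\varphi_{n-1}$ for $H_{n-1,n-1}$ on each half, provided the two spectra are almost surely disjoint; the latter is a Wegner-type consequence of the independent, absolutely continuous eigenvalue distributions on the two halves. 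Writing $H_{m,m} = H_{m,m-1} + p_m\ketbra{\varphi_m}$ and invoking the first fact then closes the induction, whose base case $H_{1,0} = V\,1_{B_1}$ is immediate because $V_0 \neq V_1$ almost surely.

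For part (ii), taking resolvents on each summand of the direct sum yields the explicit formula
\[
F_n(t) \;=\; \tfrac{1}{2}\bigl(\Phi_{n-1}^{(1)}(t)^{-1} + \Phi_{n-1}^{(2)}(t)^{-1}\bigr),
\]
where $\Phi_{n-1}^{(1)}$ and $\Phi_{n-1}^{(2)}$ are independent copies of the quantity from Corollary~\ref{fdistribution} and therefore each have density $\rho_{n-1,t} := T_{p_{n-1}} \cdots T_{p_1}\varrho_t$. The key step is a distributional identity: setting $U = \bigl(\tfrac{1}{2\Phi_{n-1}^{(1)}(t)} + \tfrac{1}{2\Phi_{n-1}^{(2)}(t)}\bigr)^{-1}$ one reads off $F_n(t) = 1/U$ pointwise, while by the very definition of $T_{p_n}$ the shifted variable $W := U + p_n$ has density $T_{p_n}\rho_{n-1,t} = T_{p_n}\cdots T_{p_1}\varrho_t$. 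Thus $F_n(t) \stackrel{d}{=} 1/(W - p_n)$, so the event $\{|F_n(t)| \geq |\alpha|^{-1}\}$ coincides with $\{W \in [p_n - |\alpha|,\, p_n + |\alpha|]\}$. Bounding its probability by $2|\alpha|$ times $\|T_{p_n}\cdots T_{p_1}\varrho_t\|_\infty$ and invoking Assumption~\ref{nonconcentration} delivers the claimed estimate.

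The main (and fairly modest) obstacle is recognizing the distributional identity $F_n(t) \stackrel{d}{=} 1/(W - p_n)$ in terms of a single renormalized potential variable $W$; once that is in place, the sup-norm hypothesis on the iterated density does all the remaining work. The only technical point is that $F_n(t)$ and each $\Phi_{n-1}^{(j)}(t)$ must be almost surely well defined for real $t$, which follows from the Wegner estimate together with the continuous distribution of the potential values.
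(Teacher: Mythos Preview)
Your proof is correct and follows essentially the same structure as the paper. For part~(i) your induction via rank-one perturbations and direct sums is identical to the paper's, except that where the paper invokes the Minami estimate to guarantee simple spectrum of $H_{n-1,n-1}^{(1)} \oplus H_{n-1,n-1}^{(2)}$, you use the more elementary observation that independence plus the Wegner bound forces almost-sure disjointness of the two spectra---this is a mild simplification.

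For part~(ii) you take a slightly different (and cleaner) route: rather than the paper's union bound
\[
\pp(|F_n(t)| \ge |\alpha|^{-1}) \le 2\,\pp(|\Phi_{n-1}(t)| \le |\alpha|) \le 4|\alpha|\,\|T_{p_{n-1}}\cdots T_{p_1}\varrho_t\|_\infty,
\]
you recognize that $U = 1/F_n(t)$ itself satisfies $U + p_n \sim T_{p_n}\cdots T_{p_1}\varrho_t$, turning the event into an exact small-ball probability for a single renormalized variable at level $n$ instead of two at level $n-1$. Both arguments feed into Assumption~\ref{nonconcentration} in the same way and yield the same $C\,2^{(c-\delta)n}|\alpha|$ bound; yours just avoids the factor-of-two loss from the union bound.
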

\begin{proof} The vector $\varphi_n$ is cyclic for $H_{n,n-1}$ if and only if
\[{\rm span}\{f(H_{n,n-1}) \varphi \, | \, f \in C_0\} = \ell^2(B_n),\]
which is clearly true almost surely when $n = 1$. Now suppose the result is true for $H_{n,n-1}$. Since
\[H_{n,n} = H_{n,n-1} + p_n \ketbra{\varphi_n},\]
$\varphi_n$ is cyclic for $H_{n,n}$ whenever it is cyclic for $H_{n,n-1}$ ~\cite{MR2154153}. It follows that $\varphi_{n+1} = \frac{1}{\sqrt{2}}(\varphi_n \oplus \varphi_n)$ is cyclic for $H_{n+1,n} = H_{n,n}^{(1)} \oplus H_{n,n}^{(2)}$ when the spectrum is simple, as is almost surely the case by the Minami estimate~\cite{Min96} (see also~\cite{MR3364516}).

 For the second part, recall Lemma~\ref{fdistribution}, which asserts that 
\[\Phi_{n-1}(t) = \left(\langle \varphi_{n-1}, (H_{n-1,n-1} - t)^{-1} \varphi_{n-1} \rangle \right)^{-1}\]
is a random variable with density $T_{p_{n-1}}...T_{p_1} \varrho_t$. Since $F_n(t)$ is an average of two independent copies of $\left(\Phi_{n-1}(t)\right)^{-1}$, we have: 
\begin{align*} \pp( \left| F_n(t) \right|  \ge  |\alpha|^{-1}) &\le 2 \, \pp( \left| \Phi_{n-1}(t)\right| \leq |\alpha| )\\
& \le  4 |\alpha| \,  \|T_{p_{n-1}}...T_{p_1} \varrho_t\|_\infty \\
&  = |\alpha| \cdot  \mathcal{O}\left(2^{(c-\delta)n}\right) 
\end{align*}
where the last estimate holds uniformly in $t \in I$ thanks to Assumption~\ref{nonconcentration}.
\end{proof}

Our next goal is to understand how the finite-volume density of states
\[\nu_n(f) = 2^{-n} \tr f(H_n)\]
is approximated by its analogue
\[\nu_{n,m}(f) = 2^{-n} \tr f(H_{n,m}),\]
which is the content of Theorem~\ref{approximationbynunm} below. For its statement, we introduce the notation
\[\zel \vcentcolon= E + 2^{-\ell} z\]
for all $z \in \cp$ and $\ell \geq 0$. The connection between Theorem~\ref{poissonstatisticsthm} and $\nu_n$ is through the formula
\beq{munupoisson}
\mu_n(P_z) = \nu_n(P_{z_n}).
\eeq

\begin{theorem}\label{approximationbynunm}
Suppose Assumption~\ref{nonconcentration} is satisfied in an open set $I \subset \rr$ and $E \in I$. 
Let $z \in \cp$ and set
$ o_\ell(z) := \int_{I^c } P_{\zel}(t) dt $. Then:
 \begin{enumerate}[i.]
 \item There is some $ \varepsilon > 0 $ such that $  o_\ell(z)  \leq \frac{2 \im z}{\varepsilon \, \pi} \, 2^{-\ell } $. In particular, 
$ o_\ell(z) $ is a null sequence for any $ z \in \cp $ as $ \ell \to \infty $. 
\item There is some $C < \infty$, which does not depend on $n,m, \ell $ or $z $, such that for all $ m \leq n $:
\begin{align} \label{nunmbound1}
& \ee |\nu_{n,n}(P_{\zel}) - \nu_{n,m}(P_{\zel})|  \le C \, (\im z)^{-1}  \, 2^{\ell -m } \left( 2^{-\delta m}  +  o_\ell(z)  \right) \, , \\
\label{nunmbound2}
& \ee |\nu_n(P_{\zel}) - \nu_{n,n}(P_{\zel})| \le C \,  (\im z)^{-1}   \, 2^{\ell -n }   \left( 2^{-\delta n}  + o_\ell(z) \right) \, . 
\end{align}
  \end{enumerate}
\end{theorem}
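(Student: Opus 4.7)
\emph{Part (i).} Since $I$ is open and contains $E$, I would choose $\varepsilon > 0$ with $[E - 2\varepsilon, E + 2\varepsilon] \subset I$. For $\ell$ large enough that $2^{-\ell}|\re z| \leq \varepsilon$, the center $\re z_\ell = E + 2^{-\ell}\re z$ of the Poisson kernel lies in $[E - \varepsilon, E + \varepsilon]$, so that $I^c \subset \{t : |t - \re z_\ell| \geq \varepsilon\}$. A direct tail estimate of the Poisson kernel then gives $o_\ell(z) \leq \tfrac{2\im z_\ell}{\pi\varepsilon} = \tfrac{2\im z}{\pi\varepsilon} 2^{-\ell}$; the finitely many small $\ell$ not covered by this estimate are absorbed into the constant via the trivial bound $o_\ell(z) \leq 1$.

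\emph{Part (ii): strategy for~\eqref{nunmbound1}.} Telescope
\[\nu_{n,n}(P_{z_\ell}) - \nu_{n,m}(P_{z_\ell}) = \sum_{r=m+1}^n \bigl[\nu_{n,r}(P_{z_\ell}) - \nu_{n,r-1}(P_{z_\ell})\bigr]\]
and use~\eqref{hnmdecomposition} to realize each step as a sum over $2^{n-r}$ i.i.d.\ pairs $(H_{r,r}^{(j)}, H_{r,r-1}^{(j)})$ differing by the rank-one perturbation $p_r \ketbra{\varphi_r^{(j)}}$. Krein's spectral shift function (SSF) identity gives
\[\tr\bigl[P_{z_\ell}(H_{r,r}^{(j)}) - P_{z_\ell}(H_{r,r-1}^{(j)})\bigr] = \int P_{z_\ell}'(t)\, \xi_r^{(j)}(t)\, dt,\]
with $\xi_r^{(j)}(t) = \tfrac{1}{\pi}\arg(1 + p_r F_r^{(j)}(t + i0))$ satisfying $|\xi_r^{(j)}| \leq 1$. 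The triangle inequality together with identical distribution across $j$ then produces
\[\ee\bigl|\nu_{n,r}(P_{z_\ell}) - \nu_{n,r-1}(P_{z_\ell})\bigr| \leq 2^{-r} \int |P_{z_\ell}'(t)|\, \ee|\xi_r(t)|\, dt.\]

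\emph{Key estimate on $\ee|\xi_r(t)|$.} Because $H_{r,r-1}$ is a self-adjoint matrix on the finite-dimensional space $\ell^2(B_r)$, the boundary value $F_r(t + i0)$ is real for a.e.\ $t \in \rr$, which forces $\xi_r(t) \in \{0, 1\}$ almost everywhere. Consequently
\[\ee|\xi_r(t)| = \pp\bigl(1 + p_r F_r(t) < 0\bigr) \leq \pp\bigl(|F_r(t)| \geq 1/|p_r|\bigr) \leq C\epsilon\, 2^{-\delta r}, \qquad t \in I,\]
where the last inequality is Lemma~\ref{fnlemma}(ii) with $|\alpha| = |p_r|$ combined with $|p_r| \leq \epsilon 2^{-c r}$. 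Splitting the $t$-integral into contributions from $I^c$ and $I$, using $|P_{z_\ell}'(t)| \leq P_{z_\ell}(t)/\im z_\ell$ for the former and the total-variation bound $\int |P_{z_\ell}'| \leq 2/(\pi \im z_\ell)$ for the latter, each telescoping step is bounded by $\tfrac{C' 2^\ell}{\im z}\bigl(o_\ell(z) + 2^{-\delta r}\bigr)$. Inserting the prefactor $2^{-r}$ and summing the geometric series over $r \geq m+1$ yields~\eqref{nunmbound1}.

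\emph{Bound~\eqref{nunmbound2} and main obstacle.} Estimate~\eqref{nunmbound2} follows from the same strategy applied to the single rank-one perturbation $H_n = H_{n,n} + \alpha_n \ketbra{\varphi_n}$, normalized only by $1/2^n$, where $|\alpha_n| \leq C\epsilon 2^{-c n}$ follows from summability. Corollary~\ref{fdistribution} identifies the density of the reciprocal resolvent element $1/\langle \varphi_n, (H_{n,n} - t)^{-1}\varphi_n\rangle$ as $T_{p_n}\cdots T_{p_1}\varrho_t$, and Assumption~\ref{nonconcentration} together with the bound on $|\alpha_n|$ gives $\ee|\xi_{n,\infty}(t)| = O(|\alpha_n|\, 2^{(c-\delta)n}) = O(2^{-\delta n})$ uniformly in $t \in I$. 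The main technical delicacy I anticipate lies in the a.e.\ reality of the boundary value $F_r(t + i0)$---which rests on the almost-sure simplicity of the finite-volume spectra (as employed in Lemma~\ref{fnlemma}(i))---and in the clean passage from the tail bound of Lemma~\ref{fnlemma}(ii) to the SSF estimate; the remainder is routine bookkeeping of Poisson-kernel inequalities and the geometric sum.
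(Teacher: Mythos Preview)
Your proposal is correct and follows essentially the same route as the paper. The paper phrases the rank-one step via eigenvalue interlacing---writing $\nu_{n,n}(P_{z_\ell}) - \nu_{n,n-1}(P_{z_\ell}) = 2^{-n}\int 1_W(t)\,P_{z_\ell}'(t)\,dt$ with $W = \{t : F_n(t) \le -\alpha^{-1}\}$---whereas you invoke Krein's spectral shift function; but since $|\xi_r(t)| = 1_W(t)$ for a rank-one perturbation in finite dimensions, the two formulations are identical, and the remaining estimates (the $|P_{z_\ell}'|\le P_{z_\ell}/\im z_\ell$ bound, the split over $I$ and $I^c$, the geometric telescoping) match up to cosmetic differences.
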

\begin{proof}
The first assertion follows from the fact that there is a $ \varepsilon $-neighborhood of $ E \in I $ which is fully contained in $ I $ together with a simple explicit computation.

For a proof of the second assertion we set $\alpha =  p_n$ so that
\[H_{n,n} = H_{n,n-1} + \alpha \ketbra{\varphi_n}.\]
Since $\varphi_n$ is almost surely cyclic for $H_{n,n-1}$, the theory of rank-one perturbations~\cite{MR2154153} shows that the following statements are valid:
\begin{itemize}
\item The eigenvalues of $H_{n,n-1}$ coincide with the set of poles of $F_n$.
\item The eigenvalues of $H_{n,n }$ coincide with the set $\{E \in \rr \, | \, F_n(E) = -\alpha^{-1}\}$.
\item The function $F_n$ is monotone increasing between its poles.
\end{itemize}
For the sake of clarity, let us spell out the proof only in case $ \alpha > 0 $ (the case $ \alpha < 0 $ being similar). Setting $W = \{t \in \rr \, | \, F_n(t)  \le -  \alpha^{-1} \}$, the fundamental theorem of calculus implies that 
\[\nu_{n,n}(P_\zel) - \nu_{n,n-1}(P_\zel) = 2^{-n}  \int \!  1_W(t)  P^\prime_\zel (t) \, dt.\]
Since $\left| P^\prime_z(t) \right| \le (\im z)^{-1} P_z(t)$, taking the expected value yields
\begin{multline*} \ee \left| \nu_{n,n}(P_\zel) - \nu_{n,n-1} (P_\zel) \right| \le  2^{\ell - n} (\im z)^{-1} \,\ee \! \int \! 1_W(t)  P_\zel(t)\, dt\\
 = 2^{\ell - n} (\im z)^{-1} \int \! P_\zel(t)\,  \pp(F_n(t)  \le -  \alpha^{-1}) \, dt.
\end{multline*}
Because $\alpha = \mathcal{O}(2^{-c n})$, Lemma~\ref{fnlemma} asserts that
$ \pp(F_n(t)  \le -  \alpha^{-1}) \le C \, 2^{-\delta n} $
for all $t\in I $ so that
\beq{limsup}
 \int \! P_\zel(t) \, \pp(F_n(t)  \le -  \alpha^{-1}) \, dt \le C \, 2^{-\delta n} + o_l(z) .
\eeq
This proves~\eqref{nunmbound1} when $m = n-1$. Moreover, setting $\alpha = \sum_{r > n} 2^{n-r}p_r = \mathcal{O}(2^{-cn})$ and repeating the argument above with $\nu_n$ in place of $\nu_{n,n}$ and $\nu_{n,n}$ in place of $\nu_{n,n-1}$ proves~\eqref{nunmbound2}.

 For a proof of~\eqref{nunmbound1}, we expand in a  telescopic sum, i.e., for any $f \in C_0$ 
\[\nu_n(f) - \nu_{n,m}(f) = \nu_n(f) - \nu_{n,n-1}(f) +  \sum_{k=m+1}^{n-1} \left( \nu_{n, k}(f) - \nu_{n,k-1}(f) \right)\]
and
\begin{align} \label{telescopictrace} \nu_{n, k}(f) - \nu_{n,k-1}(f) &= 2^{-n} \left(\tr f(H_{n,k}) - \tr f(H_{n,k-1}) \right) \nonumber  \\
&= 2^{-(n-k)} \sum_{j=1}^{2^{n-k}} 2^{-k} \left(\tr f(H_{k,k}^{(j)}) - \tr f(H_{k,k-1}^{(j)}) \right) \nonumber \\
&= 2^{-(n-k)} \sum_{j=1}^{2^{n-k}}\left(\nu_{k,k}^{(j)}(f) -\nu_{k,k-1}^{(j)}(f)\right)
\end{align}
because of the decomposition~\eqref{hnmdecomposition}.
Taking moments and noticing that each term in~\eqref{telescopictrace} has the same distribution yields 
\begin{align*}\ee|\nu_n(P_\zel) - \nu_{n,m}(P_\zel)| \le & \  C \,  (\im z)^{-1}   \, 2^{\ell -n }   \left( 2^{-\delta n}  + o_\ell(z) \right) \\ 
\ \quad & +  \sum_{k=m+1}^{n-1}C \,  (\im z)^{-1}   \, 2^{\ell - (k-1)}   \left( 2^{-\delta (k-1)}  + o_\ell(z) \right) \\
\le & \  C \, (\im z)^{-1}  \, 2^{\ell -m } \left( 2^{-\delta m}  +  o_\ell(z)  \right) . 
\end{align*}
\end{proof}

By the Wegner estimate, the measures $\ee \nu_n$ and $\ee \nu_{n,m}$ are absolutely continuous with densities that are uniformly bounded independently of $n$ and $m$, and~\eqref{munupoisson} shows that the same is true of $\ee \mu_n$. Moreover, by ergodicity ~\cite{MR3364516,MR2413200,MR2909106},
\begin{equation}\label{eq:ergod}
\lim_{n \to \infty} \ee \nu_n(f) = \nu(f)
\end{equation}
for all $f \in C_0$. We will now show that this limit also exists with $\nu_n$ replaced by $\mu_n$.
\begin{corollary}\label{intensity} If $E$ is a Lebesgue point of $\nu$, then
 \[\lim_{n \to \infty} \ee \mu_n(B) = \nu(E)|B|\]
 for all bounded Borel sets $B\subset \mathbb{R} $.
\end{corollary}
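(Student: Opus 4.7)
The plan is as follows. By the Wegner estimate, $\ee\mu_n$ has density bounded by $\|\varrho\|_\infty$ uniformly in $n$, so vague convergence $\ee\mu_n \to \nu(E)\, dx$ will upgrade automatically to the stated convergence on bounded Borel sets. Since each indicator $1_{(a,b)}$ is an $L^1$-limit of $I_\eta(\cdot) := \int_a^b P_{t+i\eta}(\cdot)\, dt$ as $\eta \downarrow 0$, and $\ee\mu_n(I_\eta) = \int_a^b \ee\mu_n(P_{t+i\eta})\, dt$ by Fubini, the claim reduces to showing
\[
\lim_{n\to\infty}\ee\mu_n(P_z) \;=\; \nu(E) \qquad\text{for every } z\in\cp.
\]

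Using the identity $P_{z_n}(\lambda) = 2^n P_z(2^n(\lambda - E))$, I rewrite $\ee\mu_n(P_z) = \ee\nu_n(P_{z_n})$. Fixing $\alpha \in (1/(1+\delta),\,1)$ and setting $m(n) := \lceil \alpha n\rceil$, the two bounds of Theorem~\ref{approximationbynunm} at level $\ell = n$ with intermediate scale $m(n)$ give $\ee|\nu_n(P_{z_n}) - \nu_{n, m(n)}(P_{z_n})| = o(1)$: both error factors $2^{n(1-(1+\delta)\alpha)}$ and $2^{n-m(n)}\, o_n(z) \lesssim 2^{-\alpha n}$ (using part~(i) of the theorem) tend to zero under the choice of $\alpha$. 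The decomposition $H_{n,m(n)} = \bigoplus_{j=1}^{2^{n-m(n)}} H_{m(n),m(n)}^{(j)}$ into i.i.d.\ blocks and linearity of expectation give $\ee\nu_{n,m(n)}(P_{z_n}) = \ee\nu_{m(n), m(n)}(P_{z_n})$, and a further application of~\eqref{nunmbound2} at level $m(n)$ yields
\[
\ee\mu_n(P_z) \;=\; \ee\nu_{m(n)}(P_{z_n}) + o(1).
\]

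Writing $\bar G_k$ and $G$ for the Stieltjes transforms of $\ee\nu_k$ and $\nu$, the Wegner bound gives $\im\bar G_k,\ \im G \le \pi\|\varrho\|_\infty$ uniformly on $\cp$, while ergodicity~\eqref{eq:ergod} gives $\bar G_k \to G$ pointwise on $\cp$; Vitali's theorem then upgrades this to uniform convergence on compact subsets of $\cp$. Because $z_n \to E$ non-tangentially (the angle $\arg(z_n - E)$ is constant) and $E$ is a Lebesgue point of $\nu$, Fatou's theorem for Poisson integrals gives $\nu(P_{z_n}) = \im G(z_n)/\pi \to \nu(E)$, so the problem reduces to proving $\im\bigl(\bar G_{m(n)}(z_n) - G(z_n)\bigr) \to 0$.

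This last convergence is the main obstacle, since the Vitali convergence $\bar G_{m(n)} \to G$ is only uniform on compacta of $\cp$ while $z_n$ exits to the real boundary. My plan is to interpolate between scales via the Poisson semigroup identity $P_{E+i\eta} = P_{z_n} \ast P_{E+i\eta - z_n}$ (valid once $\eta > 2^{-n}\im z$): it rewrites $\ee\nu_{m(n)}(P_{E+i\eta})$ as a $P_{E+i\eta - z_n}$-weighted average of the shifted quantities $\ee\nu_{m(n)}(P_{z_n + s})$. Ergodicity then gives $\ee\nu_{m(n)}(P_{E+i\eta}) \to \nu(P_{E+i\eta})$ at the fixed interior point $E + i\eta$, and $\nu(P_{E+i\eta}) \to \nu(E)$ as $\eta \downarrow 0$ by Fatou. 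Closing the loop requires a quantitative bound on the oscillation of $s \mapsto \ee\nu_{m(n)}(P_{z_n + s})$ that survives integration against $P_{E+i\eta - z_n}$; controlling this by combining the Wegner estimate with the quantitative decay of $o_n(z)$ from Theorem~\ref{approximationbynunm}(i) is the technical heart of the argument.
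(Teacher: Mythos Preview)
Your reduction to the statement $\ee\nu_n(P_{z_n}) - \nu(P_{z_n}) \to 0$ (equivalently, after your detour through scale $m(n)$, $\im\bigl(\bar G_{m(n)}(z_n) - G(z_n)\bigr)\to 0$) is correct, as is the observation that $\nu(P_{z_n}) \to \nu(E)$ by the Lebesgue-point hypothesis. The gap is in the final step: you yourself flag the boundary convergence $\bar G_{m(n)}(z_n) - G(z_n)\to 0$ as ``the main obstacle'' and sketch a plan involving the Poisson semigroup, but you never carry it out. The convolution identity $P_{E+i\eta}=P_{z_n}\ast P_{E+i\eta-z_n}$ lets you express $\ee\nu_{m(n)}(P_{E+i\eta})$ as a $P_{E+i\eta-z_n}$-average of the values $\ee\nu_{m(n)}(P_{z_n+s})$, but this goes the wrong way: recovering the value at the single point $z_n$ from a smoothed version at fixed height $\eta$ is a deconvolution, and the ``quantitative bound on the oscillation'' you allude to is neither stated nor proved. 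Nothing in Theorem~\ref{approximationbynunm}(i) controls the $s$-dependence of $\ee\nu_{m(n)}(P_{z_n+s})$.

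The paper avoids this problem altogether. Rather than passing to the boundary in $\bar G_k$ and $G$ separately, it compares them at the \emph{same} approaching point $z_n$ by one more application of Theorem~\ref{approximationbynunm}. The key algebraic identity is $\ee\nu_{p,n}=\ee\nu_{n,n}$ for every $p\ge n$, immediate from the block decomposition~\eqref{hnmdecomposition}. Hence
\[
\ee\nu_n(P_{z_n})-\ee\nu_p(P_{z_n})
=\bigl[\ee\nu_n(P_{z_n})-\ee\nu_{n,n}(P_{z_n})\bigr]
+\bigl[\ee\nu_{p,n}(P_{z_n})-\ee\nu_p(P_{z_n})\bigr],
\]
and by~\eqref{nunmbound1}--\eqref{nunmbound2} with $\ell=m=n$ both brackets are $O\bigl((\im z)^{-1}(2^{-\delta n}+o_n(z))\bigr)$, \emph{uniformly in $p\ge n$}. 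Sending $p\to\infty$ and using ergodicity~\eqref{eq:ergod} at the fixed interior point $z_n$ turns the left side into $\ee\nu_n(P_{z_n})-\nu(P_{z_n})$; then $n\to\infty$ finishes. No Vitali argument, no boundary-limit issue, and no intermediate scale $m(n)$ is needed.
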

\begin{proof}
  That $E$ is a Lebesgue point of $\nu$ means that
  \[\lim_{n \to \infty} 2^n \nu(2^{-n}B + E) = \nu(E)|B|\]
  so it suffices to prove the relation
  \beq{borelsetlimit}
  \lim_{n \to \infty} \left( \ee \mu_n(B) - 2^n \nu(2^{-n}B + E) \right)= 0.
  \eeq
  Since $B$ is bounded, $1_B$ can be approximated arbitrarily well in $L^1$ by finite linear combinations from the set $\{P_z \, | \, z \in \cp\}$. Moreover, since the measures occurring in~\eqref{borelsetlimit} are absolutely continuous with densities bounded uniformly in $n$, we conclude that it is enough to show~\eqref{borelsetlimit} with $B$ replaced by $P_z$. By~\eqref{munupoisson}, this is equivalent to 
  \[\lim_{n \to \infty} \left( \ee \nu_n(P_{z_n}) - \nu(P_{z_n}) \right) = 0.\]
  Applying~\eqref{eq:ergod} and the fact that $ \ee \nu_{p,n} = \ee \nu_{n,n} $ for any $ p \geq n $ (cf.~\eqref{hnmdecomposition}) we conclude from Theorem~\ref{approximationbynunm}  that
  \begin{align*}\lim_{n \to \infty} \left|\ee \nu_n(P_{z_n}) - \nu(P_{z_n})\right| &= \lim_{n \to \infty}\lim_{p \to \infty} \left|\ee \left[ \nu_n(P_{z_n}) - \nu_p(P_{z_n}) \right]\right|\\
   &= \lim_{n \to \infty}\lim_{p \to \infty} \left| \ee \left[ \nu_{p,n}(P_{z_n}) - \nu_p(P_{z_n}) \right]\right| = 0. 
  \end{align*}
\end{proof}

  The next corollary defines the approximating processes $\tilde{\mu}_n$ alluded to earlier.\begin{corollary}\label{divisibility}
  There exists a sequence $m_n$ with $m_n \to \infty$ and $0 < n - m_n \to \infty$ such that the measure defined by
  \[\tilde{\mu}_n(P_z) = \nu_{n,m_n}(P_{z_n})\]
  satisfies
  \[\lim_{n \to \infty} \ee |\mu_n(P_z) - \tilde{\mu}_n(P_z)| = 0\]
  for all $z \in \cp$.
\end{corollary}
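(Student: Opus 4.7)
The plan is to reduce the claim to two applications of Theorem~\ref{approximationbynunm} after combining it with the identity $\mu_n(P_z) = \nu_n(P_{z_n})$ from~\eqref{munupoisson}. The triangle inequality gives
\[
\ee | \mu_n(P_z) - \tilde\mu_n(P_z) | \leq \ee | \nu_n(P_{z_n}) - \nu_{n,n}(P_{z_n}) | + \ee | \nu_{n,n}(P_{z_n}) - \nu_{n,m_n}(P_{z_n}) |,
\]
so it suffices to show that each term tends to zero for a suitable choice of $m_n$. Both terms are controlled by Theorem~\ref{approximationbynunm} with $\ell = n$, so the first task is to understand the relevant bounds at $\ell = n$.

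The error term $o_n(z)$ appearing in Theorem~\ref{approximationbynunm} is, by part (i) of that theorem, bounded by $C' (\im z)\, 2^{-n}$. Applying~\eqref{nunmbound2} with $\ell = n$ then yields
\[
\ee | \nu_n(P_{z_n}) - \nu_{n,n}(P_{z_n}) | \leq C\,(\im z)^{-1}\bigl( 2^{-\delta n} + C'(\im z)\,2^{-n} \bigr),
\]
which clearly tends to zero as $n \to \infty$. For the second term, I would apply~\eqref{nunmbound1} with $\ell = n$ and $m = m_n$ to obtain
\[
\ee | \nu_{n,n}(P_{z_n}) - \nu_{n,m_n}(P_{z_n}) | \leq C\,(\im z)^{-1}\, 2^{n - m_n}\bigl( 2^{-\delta m_n} + C'(\im z)\,2^{-n} \bigr).
\]

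It remains to choose $m_n$ so that $2^{n - m_n(1+\delta)}$ and $2^{-m_n}$ both vanish while simultaneously preserving $m_n \to \infty$ and $n - m_n \to \infty$. Fixing any $\eta \in (0, \delta/(1+\delta))$ and setting $m_n = \lfloor (1-\eta) n \rfloor$ does the job: then $m_n \to \infty$, $n - m_n \sim \eta n \to \infty$, and
\[
n - m_n(1+\delta) \sim n\bigl(\eta(1+\delta) - \delta\bigr) \to -\infty,
\]
which forces $2^{n - m_n - \delta m_n} \to 0$. Consequently both error terms tend to zero and the claim follows.

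The only step that requires genuine care is the choice of $m_n$: one has to balance the blow-up factor $2^{n-m_n}$ coming from the telescoping in Theorem~\ref{approximationbynunm} against the gain $2^{-\delta m_n}$ supplied by Assumption~\ref{nonconcentration}. The exponential strength of the non-concentration bound (the $\delta > 0$) is precisely what buys a linear region of admissible $m_n$ and, in particular, allows $m_n$ to grow while $n - m_n$ also grows without bound — the latter being crucial for the later application of Proposition~\ref{poissoncharacterization}, where the decomposition~\eqref{hnmdecomposition} at scale $m_n$ must produce arbitrarily many independent blocks.
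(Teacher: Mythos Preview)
Your proof is correct and follows the same approach as the paper: both use \eqref{munupoisson}, split via the triangle inequality, and apply the two bounds \eqref{nunmbound1}--\eqref{nunmbound2} of Theorem~\ref{approximationbynunm} with $\ell = n$, then choose $m_n$ so that $m_n \to \infty$, $n - m_n \to \infty$, and $n - (1+\delta)m_n \to -\infty$. Your version is slightly more explicit in that you supply a concrete choice $m_n = \lfloor (1-\eta)n \rfloor$ with $\eta \in (0,\delta/(1+\delta))$, whereas the paper simply asserts such a sequence exists.
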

\begin{proof}
 Using~\eqref{munupoisson} and Theorem~\ref{approximationbynunm} we see that
 \begin{multline*}
 \ee |\mu_n(P_z) - \tilde{\mu}_n(P_z)| \  = \ee |\nu_n(P_{z_n}) - \nu_{n,m_n}(P_{z_n})| \\
  \le C (\im z)^{-1}  \left[  \, 2^{-\delta n} + \, 2^{n - (1+\delta)m_n} +  \frac{2 \im z}{\varepsilon \, \pi} \left( 2^{-n} + 2^{-m_n} \right) \right].
 \end{multline*}
 Since $\delta > 0$, we can choose $m_n$ such that $m_n \to \infty$, $n - m_n \to \infty$ and $n - (1+\delta)m_n \to -\infty$ which proves the result.
\end{proof}

 Combining the fact that $|e^{-t_1} - e^{-t_2}| \le |t_1 - t_2|$ when $t_1, t_2 \geq 0$ with Corollary~\ref{divisibility} implies that $\tilde{\mu}_n$ satisfies
\[ \lim_{n \to \infty} \ee e^{-\tilde{\mu}_n(P_z)} = \lim_{n \to \infty} \ee e^{-\mu_n(P_z)}.\]
It thus remains to show that $\tilde{\mu}_n$ satisfies the hypothesis of Proposition~\ref{poissoncharacterization}. In the interest of readability, we will suppress the dependence on $n$ and write simply $m$ in place of $m_n$ for the remainder of this section. By~\eqref{hnmdecomposition}, $\tilde{\mu}_n$ is a sum of independent point processes
\[\tilde{\mu}_n = \sum_{j=1}^{2^{n-m}} \tilde{\mu}_{n,j}\]
with
\[\tilde{\mu}_{n,j}(B) = \tr 1_{2^{-n}B + E}(H_{m,m}^{(j)})\]
for all Borel sets $B \subset \rr$. By a theorem of Combes-Germinet-Klein~\cite{CGK09} (cf.~\cite{MR3364516,Min96}), 
\[\pp(\tr 1_B(H_{m,m}) \geq \ell) \le \frac{\left(C \, 2^m |B|\right)^\ell}{\ell!} \]
which implies
\begin{equation}\label{cgkbound}
  \pp(\tilde{\mu}_{n,j}(B) \geq \ell) \le \frac{(C |B| \, 2^{m-n})^\ell}{\ell!}.
\end{equation}
Since $n-m \to \infty$, this shows immediately that the first requirement of Proposition~\ref{poissoncharacterization} is satisfied. For the other requirements, let us abbreviate
\[X(n, \ell) = \sum_{j=1}^{2^{n-m}} \pp(\tilde{\mu}_{n,j}(B) \geq \ell)\]
so that~\eqref{cgkbound} implies
\[X(n,\ell) \le 2^{n-m} \frac{(C |B|\, 2^{m-n})^\ell}{\ell!} \to 0\]
when $\ell \geq 2$. In particular, $X(n, 2) \to 0$ and the last assumption of Proposition~\ref{poissoncharacterization} is satisfied. Since $\tilde{\mu}_{n,j}(B)$ takes values in the non-negative integers
\[\lim_{n \to \infty} X(n,1) = \lim_{n \to \infty} \sum_{j=1}^{2^{n - m}} \ee \tilde{\mu}_{n,j}(B) - \lim_{n \to \infty} \sum_{\ell \geq 2} X(n,\ell).\]
By~\eqref{cgkbound} and the dominated convergence theorem
\[\lim_{n \to \infty} X(n,1) = \lim_{n \to \infty} \sum_{j=1}^{2^{n - m}} \ee \tilde{\mu}_{n,j}(B) = \lim_{n \to \infty} \ee  \tilde{\mu}_{n}(B),\]
so to finish the proof it suffices to derive the identity
\begin{equation}\label{intensityconvergence}
 \lim_{n \to \infty} \ee  \tilde{\mu}_{n}(B) = \nu(E)|B|
\end{equation}
for any bounded Borel set $B$. Since $\tilde{\mu}_{n}(B)$ and $\mu_n$ are easily seen to have uniformly bounded densities, we can approximate $1_B$ by linear combinations from $\{P_z \, | \, z \in \cp\}$ and use Corollary~\ref{divisibility} to see that
\[\lim_{n \to \infty} \ee |\tilde{\mu}_{n}(B) - \mu_n(B)| = 0\]
for all bounded Borel sets $B$. Thus we can replace $\ee \tilde{\mu}_{n}(B)$ with $\ee \mu_n(B)$ in~\eqref{intensityconvergence} and Corollary~\ref{intensity} concludes the proof of Theorem~\ref{poissonstatisticsthm}.

\appendix
\section{The Renormalized Density}\label{densityappendix}
This appendix consists of the proofs of several claims regarding the renormalized densities $T_{p_r}...T_{p_1}\varrho_E$ made in the main part of the text. Let us start by proving the claim in Section~\ref{renormalizationsection}, that $T_p(\varrho, \tilde{\varrho})$ is bounded if $\varrho$ and $\tilde{\varrho}$ are.

\begin{lemma} \label{easybound} Suppose $\varrho, \tilde{\varrho}  \in L^\infty$ are probability densities. Then
\[\|T_p(\varrho, \tilde{\varrho})\|_\infty \le \|\varrho\|_\infty + \|\tilde{\varrho}\|_\infty \]
for any $p \in \rr$.
\end{lemma}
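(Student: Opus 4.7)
The plan is to compute the density $T_0(\varrho,\tilde{\varrho})$ of the random variable $U = 2VV'/(V+V')$ explicitly and bound it pointwise; since $T_p(\varrho,\tilde{\varrho})$ is just a translate of $T_0(\varrho,\tilde{\varrho})$, it suffices to treat the case $p=0$. The first step is the change of variables $(V,V') \mapsto (V,U)$. For fixed $v$, solving $u = 2vv'/(v+v')$ gives $v' = uv/(2v-u)$, with Jacobian $\partial u/\partial v' = 2v^2/(v+v')^2$. Thus for almost every $u$,
\[
T_0(\varrho,\tilde{\varrho})(u) = \int_\mathbb{R} \varrho(v)\, \tilde{\varrho}\!\left(\tfrac{uv}{2v-u}\right)\, \frac{(v+v')^2}{2v^2}\, dv,
\]
where $v' = uv/(2v-u)$ throughout the integrand.

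Next, I would apply the elementary inequality $(v+v')^2 \le 2(v^2 + v'^2)$, which follows from $(v-v')^2 \ge 0$, to split the kernel as $(v+v')^2/(2v^2) \le 1 + v'^2/v^2$. This decomposes the density bound into a sum $A + B$. The term $A = \int \varrho(v)\tilde{\varrho}(v'(v,u))\, dv$ is bounded at once by $\|\tilde{\varrho}\|_\infty \int \varrho(v)\, dv = \|\tilde{\varrho}\|_\infty$, using only that $\varrho$ is a probability density.

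For the second term $B = \int \varrho(v)\tilde{\varrho}(v')(v'^2/v^2)\, dv$, I would use that the map $v \mapsto w := uv/(2v-u)$ is an involution on $\mathbb{R}\setminus\{u/2\}$ when $u\ne 0$, with inverse $w \mapsto uw/(2w-u)$ and Jacobian $|dv/dw| = u^2/(2w-u)^2$. A direct substitution gives $v^2 = u^2 w^2/(2w-u)^2$, hence $w^2/v^2 = (2w-u)^2/u^2$, and this factor cancels the Jacobian exactly. Therefore
\[
B = \int_\mathbb{R} \varrho\!\left(\tfrac{uw}{2w-u}\right) \tilde{\varrho}(w)\, dw \le \|\varrho\|_\infty \int \tilde{\varrho}(w)\, dw = \|\varrho\|_\infty,
\]
and combining the two estimates yields $T_0(\varrho,\tilde{\varrho})(u) \le \|\varrho\|_\infty + \|\tilde{\varrho}\|_\infty$ as required.

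The only non-routine step is the Jacobian bookkeeping and the verification that the map $v \leftrightarrow v'$ is an involution, which is essentially just the observation that $u$ is a symmetric function of $v$ and $v'$. There is no serious analytic obstacle; the exceptional sets $v = u/2$ and $u = 0$ are of measure zero and can be ignored for the pointwise a.e.\ bound that controls $\|T_p(\varrho,\tilde{\varrho})\|_\infty$.
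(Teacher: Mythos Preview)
Your proof is correct and rests on the same core inequality $(v+v')^2 \le 2(v^2+v'^2)$ as the paper's argument; the paper phrases the estimate dually by testing against $f\in L^1$ and splitting via $\partial_v u + \partial_w u \ge 1$, whereas you compute the density explicitly and exploit the $v\leftrightarrow v'$ involution at fixed $u$, but these are just two packagings of the same change-of-variables bound.
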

\begin{proof} Notice that
\[\ee_{T_0\varrho} f  \vcentcolon = \int \! f(v) T_0(\varrho, \tilde{\varrho})(v) \, dv = \int \! f \left(\frac{2vw}{v + w}\right) \varrho(v) \tilde{\varrho}(w) \, dv \, dw\]
for any sufficiently regular $f \in L^1$ and
\[\left(\frac{\partial}{\partial v} + \frac{\partial}{\partial w}\right) \left(\frac{2vw}{v + w}\right) = 2\frac{v^2 + w^2}{(v+w)^2} \geq 1.\]
Thus $\ee_{T_0\varrho} f$ is bounded by
\begin{align*}
 & \int \! f \left(\frac{2vw}{v + w}\right) \varrho(v) \tilde{\varrho}(w) \frac{\partial}{\partial v}\left(\frac{2vw}{v + w}\right) \, dv \, dw \\
 & \quad  + \int \! f \left(\frac{2vw}{v + w}\right) \varrho(v) \tilde{\varrho}(w) \frac{\partial}{\partial w}\left(\frac{2vw}{v + w}\right) \, dv \, dw \\
&= \int \! f(x) \varrho(v(x)) \tilde{\varrho}(w) dx \, dw + \int \! f(x) \varrho(v) \tilde{\varrho}(w(x)) dx \, dv\\
&\le \left(\|\varrho\|_\infty  + \|\tilde{\varrho}\|_\infty \right) \|f\|_1.
\end{align*}
Hence $\|T_0(\varrho, \tilde{\varrho})\|_\infty \le \|\varrho\|_\infty  + \|\tilde{\varrho}\|_\infty$ and the lemma follows from the translation invariance of the norm.
\end{proof}

We will now consider the validity of Assumption~\ref{nonconcentration} for the special cases
\begin{itemize}
\item $c > 1$,
\item $V$ has a Gaussian distribution and $c > 1/2$,
\item $V$ has a Cauchy component and $c > 0$,
\end{itemize}
as mentioned in the introduction. The case $c > 1$ is an easy consequence of Lemma~\ref{easybound} since
\[ \|T_{p_r}...T_{p_1}\varrho_E\|_\infty \le 2^r \|\varrho_E\|_\infty = 2^{ r} \|\varrho\|_\infty\]
so Assumption~\ref{nonconcentration} is true with $I = \rr$ and $\delta = c-1 > 0 $.

Our analysis of the Gaussian distribution $\mathcal{N}(\mu, \sigma)$ is based on the following observations:
\begin{itemize}
\item If $V \in \rr^L$ is a random vector with independent $\mathcal{N}(0, \sigma)$ entries and $O: \rr^L \to \rr^L$ is an orthogonal matrix, then $OV$ also consists of independent $\mathcal{N}(0, \sigma)$ entries.
\item If $F: \cp \to \cp$ is a singular Herglotz function and $A \subset \rr$ is a Borel set, then
\[\int \! 1_A(F(t + i0)) P_z(t) \, dt = \int \! 1_A(t) P_{F(z)}(t) \, dt\]
where $P_z$ is the Poisson kernel corresponding to $z \in \cp$ (cf.~\cite{MR3405613}).
\end{itemize}
Let $\varphi_r = 2^{-r/2}(1, ..., 1) \in \rr^{|B_r|}$ be the unit vector with constant entries. By rotation invariance, there exists a random vector $Z \in \varphi_r^\perp$ and an independent scalar Gaussian $g \sim \mathcal{N}(0, \sigma)$ such that
\[V = g \varphi_r + \mu_r \varphi_r + Z\]
where $\mu_r = 2^{r/2}\mu$. Since there exist some $z \in \cp$ and $C < \infty$ such that the $\mathcal{N}(0,\sigma)$ density is dominated pointwise by $C \, P_z$, this implies that for any bounded Borel set $A \subset \rr$
\[\ee 1_A(\rg^r V) \le C\int \! 1_A(\rg^r (t \varphi_r +\mu_r \varphi_r + Z)) \,  P_z(t)  \, dt \, \xi(dZ),\]
where $\xi$ is some probability distribution on $\varphi_r^\perp$. Notice that $\rg^r (V)$ is a singular Herglotz function of each of the variables $V_0, ..., V_{2^r-1}$ with the property
\[\im \rg^r (V) \geq \min \{ \im V_k \, | \, 0 \le k \le 2^r - 1\}\]
which follows from the definition of $\rg (V)$ and the fact that
\[\im \left(\frac{1}{2z} + \frac{1}{2w} \right)^{-1} \geq \min \{ \im z,  \im w\}.\]
Thus $F(t) = \rg^r (t \varphi_r  + \mu_r \varphi_r + Z)$ is a singular Herglotz function of $t$  when $\mu$ and $Z$ are fixed. Hence
\begin{align*}\ee 1_A(\rg^r V) &\le  C \iint \, 1_A(t) P_{F(z)}(t) \, dt \,  \xi(dZ)\\
&\le \frac{C}{\im F(z)} |A|\\
&\le C 2^{r/2} |A|,
\end{align*}
which proves that $\|T_{p_r} ... T_{p_0}\varrho_E\|_\infty \le C \, 2^{r/2}$ uniformly in $E \in \rr$ because the previous estimates did not depend on $\mu$. Thus Assumption~\ref{nonconcentration} is true with $I = \rr$ and $\delta = c - 1/2$.

Finally, we consider the case where $\varrho$ is a mixture of Poisson kernels, i.e.,
\beq{cauchymixture} \varrho = \int_\cp \! P_z \, \mu(dz)
\eeq
for some probability measure $\mu \in M(\cp)$. A simple calculation, which is described in some detail in~\cite{MR2352276}, shows that
\[T_p\left(\int_{\cp} \! P_z \, \mu(dz) \right) = \int_{\cp} \! P_z \, T_p \mu (dz)\]
and that $\supp T_p\mu \subset \{z \in \cp \, | \, \im z > \epsilon\}$ if $\supp \mu \subset \{z \in \cp \, | \, \im z > \epsilon\}$. In particular, if $\varrho$ is of the form~\eqref{cauchymixture} with $\supp \mu \subset \{z \in \cp \, | \, \im z > \epsilon\}$, then
\[\|T_{p_r}...T_{p_1} \varrho_E\|_\infty \le \epsilon^{-1},\]
which proves Assumption~\ref{nonconcentration} with $I = \rr$ and $\delta = c$. By definition, $V$ has a Cauchy component if $\varrho = \mu \ast P_z$ for some $z \in \cp$ and some probability measure $\mu \in M(\rr)$, which is a special case of~\eqref{cauchymixture}.

\section{Eigenfunction Correlators and IPRs}\label{IPR}
The purpose of this appendix is to prove two statements made in the introduction regarding the behavior of the IPR in a regime of eigenfunction correlator localization. The arguments here do not rely on the specifics of the hierarchical model. First, let us present the proof of Corollary~\ref{cor:IPR}, which bounds the probability of the event
\[A = \left\{\begin{array}{c} \mbox{There is $ \psi \in \ell^2(B_n) $ with $ H_n \psi = \lambda \psi $ and} \\ \mbox{$\; | \lambda - E | \leq 2^{- n- 1} W $ such that $\;  P_2( \psi) \leq \varepsilon^4 $} \end{array} \quad \right\}.\]

\begin{proof}[Proof of Corollary~\ref{cor:IPR}]
Let $J_n = \{| \lambda - E | \leq 2^{- n- 1} W\}$ and let $\psi_\lambda$ denote the eigenfunction associated to an eigenvalue $\lambda \in \sigma(H_n)$. Then it follows from
\beq{iprhoelder}1 = \|\psi_\lambda\|_2^2 \le \|\psi_\lambda\|_4 \|\psi_\lambda\|_1
\eeq
that
\begin{align} \label{iprchebyshev}
\pp(A) &\le \pp\left (\sum_{\lambda \in \sigma(H_n) \cap J_n} \frac{1}{P_2(\psi_\lambda)^{1/4}} > \frac{1}{\varepsilon}\right )\\ \nonumber
& \le \varepsilon \, \ee \sum_{\lambda \in \sigma(H_n) \cap J_n} \frac{1}{\|\psi_\lambda\|_4} \le \varepsilon \, \ee \sum_{\lambda \in \sigma(H_n) \cap J_n}  \|\psi_\lambda\|_1 .
\end{align}
Since $\sum_{k \in B_n} |\psi_\lambda(k)|^2 = 1$, we have
\begin{align}\label{l1bound}
\ee 2^{-n} \sum_{\lambda \in \sigma(H_n) \cap J_n} \|\psi_\lambda\|_1 &= 2^{-n} \sum_{k \in B_n} \ee \sum_{j \in B_n} \sum_{\lambda \in \sigma(H_n) \cap J_n} |\psi_\lambda(k)|^2 |\psi_\lambda(j)|\\ \nonumber
&\le 2^{-n} \sum_{k \in B_n} \sum_{j \in B_n} Q_n(k,j; J_n)\\ \nonumber
&\le 2^{-n} \sum_{k \in B_n} C|J_n| = C|J_n|
\end{align}
in regimes of eigenfunction correlator localization. Plugging~\eqref{l1bound} into~\eqref{iprchebyshev}, we obtain
\begin{align*}
\pp(A) &\le \varepsilon \, \ee \sum_{\lambda \in \sigma(H_n) \cap J_n}  \|\psi_\lambda\|_1 \le \epsilon \, 2^n  C |J_n| = C \, W \varepsilon. 
\end{align*}
\end{proof}

Our other goal is to prove that eigenfunction correlator localization implies the lower bound~\eqref{eq:avgiprbound} for the averaged IPR
\[\Pi_n(I) = \frac{\ee \sum_{\lambda \in \sigma(H_n) \cap I} \|\psi_\lambda\|_4^4}{\ee \sum_{\lambda \in \sigma(H_n) \cap I} 1}.\]
Using~\eqref{iprhoelder} term by term yields
\[\Pi_n(I)  \geq \frac{\ee \sum_{\lambda \in \sigma(H_n) \cap I} \|\psi_\lambda\|_1^{-4}}{\ee \sum_{\lambda \in \sigma(H_n) \cap I} 1}.\]
We now apply Jensen's inequality with the probability measure defined by
\[\mu(f) = \frac{\ee \sum_{\lambda \in \sigma(H_n) \cap I} f(\lambda)}{\ee \sum_{\lambda \in \sigma(H_n) \cap I} 1}\]
and the convex function $\Phi(x) = x^{-4}$ to see that
\begin{align*}\Pi_n(I) &\geq \left( \frac{\ee \sum_{\lambda \in \sigma(H_n) \cap I} \|\psi_\lambda\|_1}{\ee \sum_{\lambda \in \sigma(H_n) \cap I} 1} \right)^{-4}\\
&=  \left( \frac{2^{-n}\ee \sum_{\lambda \in \sigma(H_n) \cap I} 1}{2^{-n}\ee \sum_{\lambda \in \sigma(H_n) \cap I} \|\psi_\lambda\|_1} \right)^{4}.
\end{align*}
The numerator of this expression is equal to
\begin{align*}
2^{-n} \ee \tr 1_I(H_n) &= 2^{-n} \sum_{k \in B_n} \ee \langle \delta_k, 1_I(H_n) \delta_k \rangle\\
& = \ee \langle \delta_0, 1_I(H_n) \delta_0 \rangle = \nu_n(I)
\end{align*}
so repeating the calculation~\eqref{l1bound} with $I$ in place of $J_n$ shows that
\begin{equation}\label{eq:IPRproof}
\Pi_n(I) \geq C^{-4} \left( \frac{\nu_n(I)}{|I|} \right)^4
\end{equation}
as desired.

\section{Spectral Localization}\label{Kritchevski}
This appendix contains the completion of an argument by E.~Kritchevski~\cite{MR2352276}, which proves that the spectrum of $H$ is almost surely of pure-point type with eigenfunctions satisfying
\[\sum_{k \in \nn} 2^{\frac{c}{4} d(0,k)} |\psi(k)|^2 < \infty\]
for any parameters $(\p, \varrho)$ without relying on Assumption~\ref{nonconcentration}. The following should be regarded as an accompanying note to~\cite{MR2352276} and thus we will not present the entire argument in detail, but simply cite the theorems of~\cite{MR2352276} as necessary. The argument makes use of the truncations $H_{n,m}$ in the $n \to \infty$ limit:
\[H_{\infty, n} = \sum_{r = 1}^n p_r E_r + V.\]
Notice that, for fixed realizations of $V$,
\[\|H - H_{\infty, n}\| \le \sum_{r = n+1}^\infty |p_r|\]
and hence
\[\lim_{n \to \infty} (H_{\infty, n} - z)^{-1} \delta_j = (H-z)^{-1} \delta_j\]
for any $j \in \nn$ and $z \in \cp$. We will be particularly interested in the quantities 
\begin{align*}G_n(j, k; z) &= \langle \delta_k,  (H_{\infty, n} - z)^{-1}  \delta_j \rangle\\
g_n(j; z) &= 2^{-n} \sum_{k \in B_n(j)} \langle \delta_k,  (H_{\infty, n} - z)^{-1}  \delta_j \rangle\\
Q_n(j, z) &= \langle \varphi_n(j),(H_{\infty, n} - z)^{-1}  \varphi_n(j) \rangle
\end{align*}
with $k \in B_n(j)$ and $\varphi_n(j) = 2^{-n/2}1_{B_n(j)}$. Proposition 2.2 of~\cite{MR2352276} contains the formula
\[G_n(j, k; z) = G_0(j, k; z) - \sum_{r = d(j, k)}^n 2^{r-1} p_r g_{r-1}(j; z)g_r(k; z).\]
Letting $w(k) = 2^{\mu d(j,k)} $ and using the triangle inequality for the $w$-weighted $\ell^2$-norm, this implies that
\[S(j, n, \mu) \vcentcolon = \left(\sum_{k \in \nn} w(k) |G_n(j, k; E)|^2 \right)^{1/2} \]
is bounded by
\begin{align*}
 |G_0(j,j; E)| + \sum_{r = 1}^n 2^{r-1} |p_r| |g_{r-1}(j; E)| \left(\sum_{d(j,k) \le r} \! w(k) |g_r(k; E)|^2 \right)^{1/2}\\
\le  |G_0(j,j; E)| + \sum_{r = 1}^n |p_r| 2^{\mu r} 2^{r-1}  |g_{r-1}(j; E)| \left(\sum_{d(j,k) \le r} \! |g_r(k; E)|^2 \right)^{1/2}.\\
\end{align*}
Provided that
\beq{kritchevskicondition}\sum_{r =1}^\infty |p_r Q_r(j, E)| < \infty
\eeq
for $\pp \otimes m$-almost every $(\omega, E) \in \Omega \times \rr$, Proposition 2.3 and the proof of Proposition 2.4 in~\cite{MR2352276} show that for almost every $(\omega, E)$ there exist constants $C(\omega, E), C^\prime(\omega,E) < \infty$ such that
\[ \left(\sum_{d(j,k) \le r}  |g_r(k; E)|^2 \right)^{1/2} \le C(\omega, E) \, 2^{\frac{c}{4} r}\]
and
\[2^{r-1}  |g_{r-1}(j; E)|  \le C^\prime(\omega, E)\]
for all $r \geq 1$. It follows that
\beq{greensfunctionbound} 
S(j, n, \mu)  \le |G_0(j,j; E)|+ C(\omega,E)C^\prime(\omega, E)\sum_{r = 1}^n 2^{-cr} 2^{\mu r} 2^{\frac{c}{4}r}.
\eeq
and since $G_0(j,j; E)$ exists and is finite for almost every $(\omega, E)$, choosing $\mu = 0$ we obtain
\[\sup_{n \geq 1} \|(H_{\infty, n} - E)^{-1} \delta_j\| = \sup_{n \geq 1} S(j,n, 0) < \infty\]
for almost every $(\omega, E)$. Applying the monotone convergence theorem to the spectral measures of $\delta_j$ for $H$ and $H_{\infty, n}$ shows that
\begin{align*}
 \lim_{\epsilon \to 0} \|(H - E - i\epsilon)^{-1} \delta_j\| &= \sup_{\epsilon > 0}\|(H - E - i\epsilon)^{-1} \delta_j\|\\
 &\le \sup_{\epsilon > 0} \sup_{n \geq 1} \|H_{\infty, n} - E - i\epsilon )^{-1} \delta_j\|\\
 &=\sup_{n \geq 1} \sup_{\epsilon > 0} \|(H_{\infty, n} - E - i\epsilon)^{-1} \delta_j\|\\
 &= \sup_{n \geq 1} \|(H_{\infty, n} - E)^{-1} \delta_j\ < \infty,
\end{align*}
and thus the Simon-Wolff Criterion~\cite{MR820340} asserts that the spectrum of $H$ is almost surely of pure-point type. If $G$ denotes the Green function of the full operator $H$, then
\begin{align*}
 |G(j,k; z) - G_n(j, k; z)| &= \left| \langle \delta_j, (H_{\infty, n} - z)^{-1} (H - H_{\infty, n}) (H-z)^{-1} \delta_k \rangle \right|\\
 &\le \|H - H_{\infty, n}\| \|(H_{\infty, n} - z)^{-1} \delta_j\| \|(H - z)^{-1} \delta_k\|
\end{align*}
and the preceding argument proves that for almost all $(\omega, E)$ we can take first $\im z \to 0$ and then $n \to \infty$ so that $G_n(j,k; E) \to G(j,k; E)$ for almost all $(\omega, E)$. Applying Fatou's lemma to~\eqref{greensfunctionbound} with $\mu = \frac{c}{4}$ we see that
\begin{align*}
 \left(\sum_{k \neq 0} 2^{\frac{c}{4} d(j,k)} |G(j, k; E)|^2 \right)^{1/2} & \le C(\omega,E)C^\prime(\omega, E) \sup_{n \geq 0} \sum_{r = 1}^n 2^{-\frac{c}{2}r}\\
 &< \infty,
\end{align*}
and a trivial modification of Theorem 9 from ~\cite{MR820340} now shows that the eigenfunctions of $H$ satisfy 
\[\sum_{k \in \nn} 2^{\frac{c}{4} d(0,k)} |\psi(k)|^2 < \infty\]
as well.

Thus it remains to prove~\eqref{kritchevskicondition}. Since $Q_r(j; z)$ is the Borel transform of a singular probability measure, Boole's inequality (\cite[Prop. 8.2]{MR3364516}) shows that
\[(\pp \otimes m)(\{|Q_n(j; E)| > 2^{\frac{c}{2} r}\}) = 2 \cdot 2^{-\frac{c}{2} r}.\]
It follows from the Borel-Cantelli lemma that
\[|Q_n(j; E)| \le C^{\prime \prime}(\omega, E) 2^{\frac{c}{2} r} \]
with $C^{\prime \prime}(\omega, E) < \infty$ on a set of full $\pp \otimes m$ measure, and this implies~\eqref{kritchevskicondition}.

\subsection*{Acknowledgment}
This work was supported by the DFG (WA 1699/2-1).
\bibliographystyle{abbrv}
\nocite{*}
\bibliography{References}
\end{document}